\algrenewcommand{\algorithmiccomment}[1]{\hfill$\triangleleft$ {\footnotesize \textsl{#1}}}
\algrenewcommand\algorithmicindent{1em}
\renewcommand{\ALG@beginalgorithmic}{\small}
\newcommand \tpl[1]{\langle #1 \rangle}
\newcommand \details[1]{}
\newcommand{\ATLStar}{\text{\sffamily ATL$^{*}$}}
\newcommand{\CTLStar}{\text{\sffamily CTL$^{*}$}}
\newcommand{\ATL}{\text{\sffamily ATL}}
\newcommand{\ATA}{\text{\sffamily ATA}}
\newcommand{\ACG}{\text{\sffamily ACG}}
\newcommand{\NTA}{\text{\sffamily NTA}}
\newcommand{\CTL}{\text{\sffamily CTL}}
\newcommand{\LTL}{\text{\sffamily LTL}}
\newcommand{\CGS}{\text{\sffamily CGS}}
\newcommand{\CGT}{\text{\sffamily CGT}}
\newcommand{\NWA}{\text{\sffamily NWA}}
\newcommand{\until}{\textsf{U}}
\newcommand{\Next}{\textsf{X}}
\newcommand{\A}{\textsf{A}}
\newcommand{\E}{\textsf{E}}
\newcommand{\Always}{\textsf{G}}
\newcommand{\Eventually}{\textsf{F}}
\newcommand{\Exists}[1]{\langle\hspace{-0.05cm}\langle #1 \rangle\hspace{-0.05cm}\rangle}
\newcommand{\Forall}[1]{[\hspace{-0.05cm}[ #1 ]\hspace{-0.05cm}]}
\newcommand{\THREEXPTIME}{{\sc 3Exptime}\xspace}
\newcommand{\TWOEXPTIME}{{\sc 2Exptime}\xspace}
\newcommand{\TWOEXPSPACE}{{\sc 2Expspace}\xspace}
\newcommand{\EXPTIME}{{\sc Exptime}\xspace}
\newcommand{\PTIME}{{\sc Ptime}\xspace}
\newcommand{\PSPACE}{{\sc Pspace}\xspace}
\newcommand{\Prop}{{\textit{AP}}}
\newcommand{\Agents}{{\textit{Ag}}}
\newcommand{\agent}{{\textit{a}}}
\newcommand{\Actions}{{\textit{Ac}}}
\newcommand{\Decisions}{{\textit{Dc}}}
\newcommand{\decision}{{\textit{d}}}
\newcommand{\dir}{{\textit{dir}}}
\newcommand{\States}{{\textit{S}}}
\newcommand{\Trans}{{\tau}}
\newcommand{\Lab}{{\textit{Lab}}}
\newcommand{\outcome}{{\textit{out}}}
\newcommand{\Unw}{{\textit{Unw}}}
\newcommand{\Trk}{{\textit{Trk}}}
\newcommand{\GS}{{\mathcal{G}}}
\newcommand{\Au}{{\mathcal{A}}}
\newcommand{\env}{{\textit{env}}}
\newcommand{\sys}{{\textit{sys}}}
\newcommand{\exec}{{\textit{exec}}}
\newcommand{\Atoms}{{\textit{Atoms}}}
\newcommand{\Tau}{{\mathcal{T}}}
\newcommand{\Lang}{{\mathcal{L}}}
\newcommand{\lst}{{\textit{lst}}}
\def\B{{\mathbb{B}}}
\def\Nat{{\mathbb{N}}}
\newcommand{\true}{\texttt{true}}
\newcommand{\begOne}{{\textit{beg}}}
\newcommand{\begTwo}{{\textit{sb-beg}}}
\newcommand{\checkOne}{{\textit{check}}}
\newcommand{\markTwo}{{\textit{sb-mark}}}
\newcommand{\checkTwo}{{\textit{sb-beg-check}}}
\newcommand{\fair}{{\textit{fair}}}
\newcommand{\EndOne}{{\textit{end}}}
\newcommand{\EndTwo}{{\textit{sb-end}}}
\newcommand{\dire}{{\textit{dir}}}
\newcommand{\Succ}{{\textit{next}}}
\newcommand{\Tag}{{\textit{tag}}}
\newcommand{\WTC}{{\textit{WTC}}}
\newcommand{\TC}{{\textit{TC}}}
\newcommand{\BCT}{{\textit{BCT}}}
\newtheorem{proposition}{Proposition}
\newtheorem{definition}{Definition}
      \newtheorem{theorem}{Theorem}
    \newtheorem{lemma}{Lemma}
    \newtheorem{corollary}{Corollary}
\newtheorem{remark}{Remark}
\title{On the Complexity of  \ATL\ and \ATLStar\ Module Checking}
\author{Laura Bozzelli \qquad Aniello Murano
\institute{University of Napoli ``Federico II'', Napoli, Italy}}
\begin{document}

\maketitle

\begin{abstract}
\emph{Module checking} has been introduced in late 1990s to verify open systems, i.e., systems whose behavior depends on the continuous interaction with the environment.
Classically, module checking has been investigated with respect to specifications given as \CTL\ and \CTLStar\ formulas.
Recently, it has been shown that \CTL\ (resp., \CTLStar) module checking offers a distinctly different perspective from the better-known problem of \ATL\ (resp., \ATLStar) \emph{model} checking. In particular, \ATL\ (resp., \ATLStar) module checking strictly enhances the expressiveness of both \CTL\ (resp., \CTLStar) module checking and \ATL\ (resp. \ATLStar) model checking.
In this paper, we provide asymptotically optimal bounds on the computational cost of module checking against \ATL\ and \ATLStar, whose upper bounds are based on an automata-theoretic approach.
 We show that   module-checking   for \ATL\ is \EXPTIME-complete, which is the same complexity of module checking against \CTL. On the other hand, \ATLStar\ module checking  turns out to be \THREEXPTIME-complete, hence exponentially harder than \CTLStar\ module checking.
\end{abstract}
%

%-------------------------------------------------------------------------------
\section{Introduction}\label{sec:intro}
%-------------------------------------------------------------------------------

\emph{Model checking} is a well-established formal-method technique to automatically check for global correctness of systems~\cite{Clarke81ctl,Queille81verification}.
In this verification
method, the behavior of a system, formally described by a mathematical model, is
checked against a behavioral constraint specified by a formula in a suitable temporal logic.
%In order to verify whether a system is correct with respect to a desired behavior, we describe its structure with a mathematical model, specify the desired beavior with a logical formula, and check formally if the model satisfies the formula.
Originally, model checking was introduced to analyze finite-state \emph{closed systems} whose dynamic behavior is completely determined by their internal states and transitions. In this specific setting, system models are usually given as labeled-state transition-graphs equipped with some internal degree of nondeterminism (e.g., Kripke structures). An unwinding of the graph results in an infinite tree, properly called \emph{computation tree}, that collects all the possible evolutions of the system. Model checking of a closed system amounts to check whether the computation tree satisfies the specification.
Properties for model checking are usually specified in temporal logics such as \LTL, \CTL, and \CTLStar~\cite{Pnueli77,EmersonH86}, or  alternating-time temporal logics   such as \ATL\ and \ATLStar~\cite{AlurHK02}, the latter ones being extensions of \CTL\ and \CTLStar, respectively, which allow for reasoning about the strategic capabilities of groups of agents.

In the last two decades, interest has arisen in analyzing the behavior of
individual components (or sets of components) in systems with multiple entities.
The interest began in the field of \emph{reactive systems}, which are characterized by a continuous interaction with their (external) environments. One of the first approaches introduced to model check finite-state reactive systems is \emph{module checking}~\cite{KV96}. %~\cite{KV96,Kupferman01moducheck}.
In this setting, the system is modeled as a \emph{module} that
interacts with its environment, and correctness means that a desired property must hold with respect to all possible interactions. Technically speaking, the module is a transition system whose states are partitioned into those controlled by the system and those controlled by the environment. The latter ones intrinsically carry an additional source of nondeterminism describing the possibility that the computation, from these states, can continue with any subset of its possible successor states. This means that while in model checking, we have only one computation tree representing the possible evolution of the system, in module checking we have an infinite number of trees to handle, one for each possible behavior of the environment. Deciding whether a module satisfies a property amounts to check that all such trees satisfy the property. %Properties for module checking are usually given in CTL or CTL*.
This makes the module-checking problem harder to deal with. Indeed, while \CTL\ (resp., \CTLStar) model checking is \PTIME-complete (resp., \PSPACE-complete)~\cite{EmersonH86},
\CTL\ (resp., \CTLStar) module checking is \EXPTIME-complete (resp., \TWOEXPTIME-complete)~\cite{KV96} with a \PTIME-complete complexity for a fixed-size formula.

For a long time, there has been a common belief that module checking of \CTL/\CTLStar\ is a special case of model checking of \ATL/\ATLStar. Because of that, active research on module checking subsided shortly after its conception.
The belief has been recently refuted in~\cite{JamrogaM14}. There, it was proved that module checking includes two features inherently absent in the semantics of \ATL/\ATLStar, namely irrevocability and nondeterminism of strategies.
This result has brought back the interests in module checking as an interesting formalism for the verification of open systems. In particular, in~\cite{JamrogaM14}, several scenarios were discussed to show the usefulness of  considering the features of both settings combined together. This has led to an extension of the module-checking framework to \ATL/\ATLStar\ specifications~\cite{JamrogaM14,JamrogaM15}. % written in the strategic logic
Notably, it has been showed that \ATL/\ATLStar\ module checking is strictly more expressive than both \CTL/\CTLStar\ module checking and \ATL/\ATLStar\ model checking~\cite{JamrogaM14,JamrogaM15}.  The computational complexity aspects have been shortly discussed in~\cite{JamrogaM15}, where it is  claimed that the complexity of \ATL/\ATLStar\ module checking is not worse than that of \CTL/\CTLStar\ module checking.

In this paper, we demonstrate that the claim made in~\cite{JamrogaM15} is not correct for \ATLStar. While \ATL\ module checking has the same complexity as \CTL\ module checking, \ATLStar\ module checking turns out to be exponentially harder than \CTLStar\ module checking, and precisely, \THREEXPTIME-complete with a \PTIME-complete complexity for a fixed-size formula\footnote{The incorrect claim in~\cite{JamrogaM15} was due  to a misleading interpretation of the result due to Schewe regarding  \TWOEXPTIME-completeness for the \ATLStar\ satisfiability problem~\cite{Schewe08}.}. The upper bounds are obtained by applying an automata-theoretic approach. 	
%	In particular, our study shows that one need to additionally perform, with respect to ~\cite{JamrogaM15}, a determinization procedure to correctly handle LTL subformulas.
The matching lower bound for \ATLStar\ is shown by a technically non-trivial reduction from the word problem for \TWOEXPSPACE-bounded  alternating
Turing Machines. %Due to space constraints, some details about the proofs are reported in the appendix.
\vspace{0.1cm}

{\bf Related work.}
Module checking was introduced in~\cite{KV96}, %~\cite{KV96,Kupferman01moducheck},
and later extended in several directions.
In \cite{Kupferman97revisited}, the basic \CTL/\CTLStar module-checking problem was extended to the setting where the environment has
imperfect information about the state of the system.
In~\cite{Bozzelli10pushdown}, it was extended to infinite-state open systems by considering pushdown modules.
The pushdown module-checking problem was first investigated for perfect information,
%, showing that it is exponentially harder than in the finite-state case.
% both in the system and formula side.
and later, in~\cite{Aminof13pushdown-jc,Bozzelli11newpushown}, for imperfect information;
the latter variant was proved in general undecidable in~\cite{Aminof13pushdown-jc}.
%, and that the undecidability relies on hiding information about the pushdown store.
%
\cite{Ferrante08enriched} address module checking against $\mu$-calculus specifications, and in~\cite{Murano08hierarchical}, the module-checking problem was studied for bounded pushdown modules (or \emph{hierarchical modules}).
From a more practical point of view, \cite{Martinelli07specification} built a semi-automated tool for module checking against the existential fragment of \CTL, both in the perfect and imperfect information setting.
A tableaux-based approach to \CTL\ module-checking was also exploited in~\cite{Basu07local}.
Finally, an extension of module checking was used to reason about three-valued abstractions in~\cite{Alfaro04three,Godefroid03open}.\vspace{0.1cm}

\section{Preliminaries}\label{sec:Preliminaries}

We fix the following notations. Let $\Prop$ be a finite nonempty set of atomic propositions,
$\Agents$ be a finite nonempty set of agents, and $\Actions$ be a finite nonempty set of actions  that can be made by agents.
For a set  $A\subseteq \Agents$ of agents, an $A$-decision $\decision_A$  is an element in $\Actions^{A}$ assigning to each agent $\agent\in A$ an action $\decision_A(\agent)$. For $A,A'\subseteq \Agents$ with $A\cap A'=\emptyset$, an $A$-decision  $\decision_A$ and $A'$-decision $\decision_{A'}$, $\decision_A\cup \decision_{A'}$ denotes the $(A\cup A')$-decision   defined in the obvious way.
Let $\Decisions = \Actions^{\Agents}$ be the set of \emph{full decisions} of all the agents in $\Agents$. %An element of $\Decisions$ is also called \emph{full decision}.

Let $\Nat$ be the set of natural numbers. For all  $i,j\in\Nat $, with $i\leq j$, $[i,j]$ denotes the set of natural numbers $h$ such that $i\leq h\leq j$.
 For an infinite word $w$ over an alphabet $\Sigma$ and $i\geq 0$, $w(i)$ denotes the $i^{th}$ letter of $w$ and $w_{\geq i}$ the suffix of $w$ given by $w(i)w(i+1)\ldots$. %starting from the $i^{th}$ letter of $w$, i.e., the infinite word $w(i)w(i+1)\ldots$.

Given a set $\Upsilon$ of directions, an (\emph{infinite}) \emph{$\Upsilon$-tree} $T$ is a   prefix closed subset of $\Upsilon^{*}$ such that
for all $\nu\in T$, $\nu\cdot \gamma\in T$ for some $\gamma\in \Upsilon$. Elements of $T$ are called nodes and $\varepsilon$ is the root of $T$. For $\nu\in T$, the set of children of $\nu$ in $T$ is the set of
nodes of the form $\nu\cdot \gamma$ for some $\gamma\in \Upsilon$.  A infinite path of $T$ is an infinite sequence $\pi$ of nodes such that  $\pi(i+1)$ is a child in $T$ of $\pi(i)$
 for all $i\geq 0$.  For an alphabet $\Sigma$, a $\Sigma$-labeled  $\Upsilon$-tree is a pair $\tpl{T, \Lab}$ consisting of a  $\Upsilon$-tree
 and a labelling $\Lab:T \mapsto \Sigma$  assigning to  each node in $T$ a symbol in $\Sigma$.
We extend the labeling $\Lab$ to infinite paths in the obvious way, i.e. $\Lab(\pi)$ denotes the infinite word over $\Sigma$ given by $\Lab(\pi(0))\Lab(\pi(1))\ldots$.
The labeled tree $\tpl{T, \Lab}$ is \emph{complete} if $T=\Upsilon^{*}$.

\subsection{Concurrent Game Structures}

Concurrent game structures (\CGS)~\cite{AlurHK02} generalize labeled transition systems %(or pointed Kripke structure)
 to a setting with multiple agents (or players). They can be viewed as multi-player games in which players perform concurrent actions, chosen strategically as a function of the history of the game.

\begin{definition}[\CGS]\label{def:CGS} A \CGS\ (over $\Prop$, $\Agents$, and $\Actions$) is a tuple $\GS =\tpl{\States,s_0,\Lab,\Trans}$, where $\States$ is a set of states, $s_0\in \States$ is the initial state, $\Lab: \States \mapsto 2^{\Prop}$ maps each state to a set of atomic propositions, and
$\Trans: \States\times \Decisions \mapsto \States \cup \{\top\}$ is a transition function
 that maps a state and a full decision either to a state or to the special symbol $\top$ ($\top$ is for `undefined') such that for all states $s$, there exists  $\decision\in\Decisions$
 so that $\Trans(s,\decision)\in\States$. The \CGS\ $\GS$ is finite if $\States$ is finite.
 Given a set  $A\subseteq \Agents$ of agents, an $A$-decision $\decision_A$, and a state $s$, we say that  \emph{$\decision_A$ is available at state $s$} if there exists an $(\Agents\setminus A)$-decision $\decision_{\Agents\setminus A}$ such that $\Trans(s,\decision_A\cup \decision_{\Agents\setminus A})\in \States$. We denote by $\Decisions_A(s)$ the nonempty set of $A$-decisions available at state $s$.

For a state $s$ and an agent $\agent$, \emph{state $s$ is controlled by $\agent$} if there is a unique $(\Agents\setminus\{a\})$-decision available at state $s$.  Agent \emph{$\agent$ is passive in  $s$} if there is a unique $\{a\}$-decision available at state $s$. A \emph{multi-agent turn-based game} is a \CGS\ where each state is controlled by an agent.
\end{definition}

We now recall the notion of strategy and counter strategy in a \CGS\ $\GS =\tpl{\States,s_0,\Lab,\Trans}$.
For a state $s$, the set of successors of $s$ is the set of states $s'$ such that $s'= \Trans(s,\decision)$ for some full decision $\decision$.
 A \emph{play} is an infinite sequence of states $s_1 s_2 \ldots $
 such that $s_{i+1}$ is a successor of $s_i$ for all $i\geq 1$. A \emph{path} (or \emph{track}) $\nu$ is a nonempty prefix of some play.  Let $\Trk$  be the set of paths in $\GS$.
  Given a set  $A\subseteq \Agents$ of agents, a \emph{strategy for $A$} is a mapping $f_A: \Trk \mapsto \Actions^{A}$ assigning to each path $\nu$  an $A$-decision available at the last state, denoted $\lst(\nu)$, of $\nu$. For a state $s$, the set $\outcome(s,f_A)$ of \emph{plays consistent with $f_A$ starting from  state $s$} is given by %  The \emph{outcome} function  for a state $s$ and the strategy $f_A$ returns the set of all the plays starting at state $s$ that can occur when agents $A$ execute strategy $f_A$ from state $s$ on.
 $
 \{s_1 s_2 \ldots\mid s_1=s \text{ and }\forall i\geq 1\,\exists d\in \Actions^{\Agents\setminus A}.\, s_{i+1}=\Trans(s_i,f_A(s_1\ldots s_i)\cup d) \}
 $.

  A \emph{counter strategy} $f^{c}_A$ for $A$ is a mapping assigning to each track $\nu$ a function
$f^{c}_A(\nu): \Decisions_A(\lst(\nu))\mapsto \Actions^{\Agents\setminus A}$,
 where the latter  assigns to each $A$-decision $\decision_A$ available at $\lst(\nu)$ an $(\Agents\setminus A)$-decision $\decision_{\Agents\setminus A}$ such that $\Trans(\lst(\nu),\decision_A\cup \decision_{\Agents\setminus A})\in\States$.
For a state $s$, the set  $\outcome(s,f^{c}_A)$   of plays consistent with the counter strategy $f^{c}_A$ starting from state $s$ is given by:
 \[
\{s_1 s_2 \ldots\mid s_1=s \text{ and }\forall i\geq 1\,\exists d\in \Decisions_A(s_i).\, s_{i+1}=\Trans(s_i,d\cup [f^{c}_A(s_1\ldots s_i)](d)) \}
\]
\begin{definition} For a set $\Upsilon$ of directions, a \emph{Concurrent Game $\Upsilon$-Tree} ($\Upsilon$-\CGT) is a \CGS\ $\tpl{T,\varepsilon,\Lab,\Trans}$, where $\tpl{T,\Lab}$ is a $2^{\Prop}$-labeled $\Upsilon$-tree, and for each node $x\in T$, the set of successors of $x$ corresponds to the set of children of $x$ in $T$. Every $\CGS$ $\GS =\tpl{\States,s_0,\Lab,\Trans}$ induces  a
$\States$-\CGT\ %, denoted by $\Unw(\GS)$,
 $\Unw(\GS)$ obtained by unwinding $\GS$ from the initial state. % in the usual way.
  Formally, $\Unw(\GS)= \tpl{T,\varepsilon,\Lab',\Trans'}$, where  $T$ is the set of elements $\nu$ in $S^{*}$ such that $s_0\cdot \nu$ is a track of $\GS$, and for all $\nu\in T$ and   $\decision \in \Decisions$, $\Lab'(\nu) = \Lab(\lst(\nu))$ and $\Trans'(\nu,\decision) = \Trans(\lst(\nu),\decision)$, where $\lst(\varepsilon)=s_0$.
\end{definition}

\subsection{Alternating-Time Temporal Logics  $\ATLStar$ and $\ATL$}\label{sec:LogicsATL}

We recall the alternating-temporal logics $\ATLStar$ and $\ATL$ proposed by Alur et al.~\cite{AlurHK02} as extensions of the standard branching-time temporal logics $\CTLStar$ and
$\CTL$~\cite{EmersonH86}, where the path quantifiers are replaced by more general parameterized quantifiers  which allow for reasoning about the strategic capability of groups of agents.
For the given sets $\Prop$ and $\Agents$ of atomic propositions and agents, $\ATLStar$ formulas $\varphi$ are defined by the following grammar:
\[
\varphi ::=  \true \ | \ \ p \ | \ \neg \varphi   \ | \ \varphi \vee \varphi \ | \ \Next \varphi\ | \ \varphi \,\until\, \varphi\ | \  \Exists{A}  \varphi
\]
where $p\in \Prop$, $A\subseteq \Agents$, $\Next$ and  $\until$ are the standard
``next'' and ``until'' temporal modalities,   and $\Exists{A}$ is the
 ``existential strategic quantifier" parameterized by a set of agents. Formula $\Exists{A}\varphi$ expresses that the group of agents $A$ has a collective strategy
    to enforce  property $\varphi$. We  use some shorthands:  the universal strategic quantifier   $\Forall{A}\varphi:=\neg\Exists{A}\neg\varphi$, expressing
    that no strategy of $A$ can prevent property $\varphi$,  the eventually  temporal modality $\Eventually \varphi := \true\,\until\, \varphi$, and the always temporal modality
    $\Always \varphi:= \neg\Eventually\neg\varphi$. A \emph{state formula} is a formula where each temporal modality is in the scope of a strategic quantifier.
    A \emph{basic formula} is a state formula of the form  $\Exists{A}  \varphi$.
 The logic $\ATL$ is the  fragment of $\ATLStar$ where each temporal modality is immediately preceded by a strategic quantifier.
  Note that \CTLStar\ (resp., \CTL) corresponds
 to the fragment of $\ATLStar$ (resp., $\ATL$), where only the strategic modalities
 $\Exists{\Agents}$  and $\Exists{\emptyset}$ (equivalent to  the existential and universal path quantifiers  $\E$ and $\A$, respectively) are allowed.

Given a \CGS\ $\GS$ with labeling $\Lab$ and a play $\pi$ of $\GS$, the
satisfaction relation $\GS,\pi  \models \varphi$ for
$\ATLStar$  is defined as follows (Boolean connectives are treated as usual):
\[ \begin{array}{ll}
\GS,\pi  \models p  &  \Leftrightarrow  p \in \Lab(\pi(0)),\\
\GS,\pi  \models \Next \varphi  & \Leftrightarrow   \GS,\pi_{\geq 1} \models \varphi ,\\
\GS,\pi  \models \varphi_1\,\until\, \varphi_2  &
  \Leftrightarrow  \exists\,j\geq 0: \GS,\pi_{\geq j}
  \models \varphi_2
  \text{ and }  \GS,\pi_{\geq k} \models  \varphi_1 \text{ for all }k\in[0,j-1]\\
\GS,\pi \models \Exists{A} \varphi  & \Leftrightarrow \text{for some strategy } f_A \text{ for }A,\,    \GS,\pi' \models \varphi \text{ for all }\pi'\in \outcome(\pi(0),f_A).
\end{array} \]
For a state $s$ of $\GS$, $\GS,s  \models \varphi$ if there is a play $\pi$ starting from $s$ such that
$\GS,\pi  \models \varphi$. Note that if $\varphi$ is a state formula, then for all plays $\pi$ and $\pi'$ from $s$,
$\GS,\pi  \models \varphi$  iff $\GS,\pi'  \models \varphi$.
$\GS$ is a model of $\varphi$, denoted $\GS\models \varphi$, if for the initial state $s_0$,
$\GS,s_0  \models \varphi$. Note that $\GS\models \varphi$ iff $\Unw(\GS)\models \varphi$.  %The \emph{finite model-checking problem against \ATL $($resp., \ATLStar$)$} is checking for a given
%finite \CGS\ $\GS$ and an \ATL\ formula (resp., \ATLStar\ state formula) $\varphi$ whether $\GS$ is a model of $\varphi$.

\begin{remark}
By \cite{Schewe08}, for a state formula of the form $\Forall{A}\varphi$,
$\GS,s\models \Forall{A}\varphi$ iff there is a counter strategy $f^{c}_A$ for $A$ such that for all $\pi\in \outcome(s,f^{c}_A)$,
$\GS,\pi\models \varphi$.
\end{remark}

\subsection{$\ATLStar$ and $\ATL$ Module checking}

Module checking was proposed in~\cite{KV96} for the verification of finite open systems,
that is systems that interact with an environment whose behavior cannot be determined in advance.
In such a framework, the system is modeled by a \emph{module} corresponding to a two-player turn-based game
between the system and the environment. Thus, in a module, the set of states is partitioned into a set of system
states (controlled by the system) and a set of environment states (controlled by the environment).
The module-checking problem takes  two inputs: a module $M$ and a branching-time temporal formula $\psi$. The idea is
that the open system should satisfy the specification $\psi$ no matter how the environment behaves.
Let us consider the unwinding   $\Unw(M)$ of $M$ into an infinite tree. Checking whether $\Unw(M)$ satisfies
$\psi$ is the usual model-checking problem. On the other hand, for an open system,
$\Unw(M)$ describes the interaction of the system with a maximal environment, i.e. an environment
that enables all the external nondeterministic choices. In order to take into account all the
possible behaviors of the environment, we have to consider all the trees $T$ obtained from $\Unw(M)$
by pruning subtrees whose root is a successor of an environment state (pruning these subtrees
correspond to disabling possible environment choices). Therefore, a module $M$ satisfies
$\psi$ if all these trees $T$ satisfy $\psi$. %For branching-time temporal requirements, module checking is exponentially harder than model checking. In particular,
%the problem is \EXPTIME-complete for specifications in \CTL\ and \TWOEXPTIME-complete for
%specifications in $\CTLStar$~\cite{KV96}. The system complexity of the problem (for a fixed formula) is \PTIME-complete.
It has been recently proved~\cite{JamrogaM14}  that module checking of \CTL/\CTLStar\ includes two features inherently absent in the semantics of \ATL/\ATLStar, namely irrevocability of strategies and nondeterminism of strategies. On the other hand, %branching-time
temporal logics like \CTL\ and \CTLStar\ do not accommodate strategic reasoning.  These facts have  motivated the extension of module checking to a multi-agent setting for handling specifications in \ATLStar\ \cite{JamrogaM15}.  We now recall this setting   which turns out to be more expressive than both $\CTLStar$ module checking and $\ATLStar$ model checking~\cite{JamrogaM14,JamrogaM15}. In this framework, one considers a generalization of modules, namely \emph{open \CGS} (called multi-agent modules in~\cite{JamrogaM15}).

\begin{definition}[Open \CGS] An open \CGS\  is a $\CGS$ $\GS =\tpl{\States,s_0,\Lab,\Trans}$ containing a special agent called ``the environment" ($\env\in\Agents$). Moreover,
for every state $s$, either $s$ is controlled by the environment (environment state) or the environment is passive in $s$ (system state).
\end{definition}

For an open $\CGS$ $\GS=\tpl{\States,s_0,\Lab,\Trans}$, the set of \emph{$($environment$)$   strategy trees of $\GS$}, denoted $\exec(\GS)$, is the set of $\States$-\CGT\ obtained from $\Unw(\GS)$ by possibly pruning some environment transitions.  Formally, $\exec(\GS)$ is the set of $\States$-\CGT\ $\Tau =\tpl{T,\varepsilon,\Lab',\Trans'}$ such that $T$ is a prefix closed subset of the set of $\Unw(\GS)$-nodes  and for all $\nu\in T$  and   $\decision \in \Decisions$, $\Lab'(\nu) = \Lab(\lst(\nu))$, and $\Trans'(\nu,\decision) = \Trans(\lst(\nu),\decision)$ if $\nu\cdot \Trans(\lst(\nu),\decision)\in T$, and $\Trans(\lst(\nu),\decision)=\top$ otherwise,  where $\lst(\varepsilon)=s_0$. Moreover, for all $\nu \in T$, the following holds:
\begin{compactitem}
\item if $\lst(\nu)$ is a system state, then for each successor $s$ of $\lst(\nu)$ in $\GS$, $\nu\cdot s \in T$;
\item if $\lst(\nu)$ is an environment state, then there is a nonempty subset $\{s_1,\ldots,s_n\}$ of the set of $\lst(\nu)$-successors such that the set of children of $\nu$ in
$T$ is $\{\nu\cdot s_1,\ldots,\nu\cdot s_n\}$.
\end{compactitem}\vspace{0.2cm}

Intuitively, when $\GS$ is in a system state $s$, then all the transitions from $s$ are enabled. When
 $\GS$ is instead in an environment state, the set of enabled transitions from $s$ depend on the current environment.
 Since the behavior of the environment is nondeterministic, we have to consider
  all the possible subsets of the set of $s$-successors.
The only constraint, since we consider environments that cannot block the system, is that
not all the transitions from $s$ can be disabled. %Note that $\Unw(\GS)\in\exec(\GS)$ ($\Unw(\GS)$ corresponds to the maximal environment that never restricts the set of its next states).
For an open $\CGS$ $\GS$  and an \ATLStar\ formula $\varphi$, $\GS$ \emph{reactively satisfies} $\varphi$, denoted $\GS\models^{r} \varphi$, if
for all strategy trees $\Tau\in \exec(\GS)$, $\Tau\models \varphi$.
Note that $\GS\models^{r} \varphi$ implies $\GS\models \varphi$ (since
$\Unw(\GS)\in\exec(\GS)$), but the converse in general does not hold. %Moreover,  $\GS\not\models^{r} \varphi$ is not equivalent to
%$\GS \models^{r} \neg\varphi$. Indeed, $\GS\not\models^{r} \varphi$ just states that there is some $\Tau\in \exec(\GS)$
%satisfying $\neg\varphi$.
The \emph{$($finite$)$ module-checking problem against} \ATL\ (resp., \ATLStar) is checking for a given
finite open \CGS\ $\GS$ and an \ATL\ formula (resp., \ATLStar\ state formula)  $\varphi$ whether $\GS\models^{r}\varphi$.

\section{Decision procedures}\label{sec:DecisionProcedures}

In this section, we provide an automata-theoretic framework
for solving the module-checking problem against \ATL\ and \ATLStar, which is based on the  use of
parity alternating automata for \CGS\ (parity \ACG)~\cite{ScheweF06}.
The proposed approach consists of two steps. For a finite $\CGS$ $\GS$ and an $\ATL$ formula (resp., $\ATLStar$ state formula)  $\varphi$,
one first builds a parity \ACG\ $\Au_{\neg\varphi}$ accepting the set of $\CGT$ which satisfy $\neg\varphi$.
Then $\GS\models_r\varphi$ iff \emph{no} strategy tree of $\GS$ is accepted by $\Au_{\neg\varphi}$.

The rest of the section is organized as follows. In Subsection~\ref{sec:translationLogics}, we recall the framework of \ACG\ and provide a translation of $\ATLStar$ state formulas into equivalent parity \ACG\ involving a double exponential blowup. For \ATL, a linear-time translation into equivalent parity \ACG\ of index $2$
directly follows from~\cite{ScheweF06}. Then, in Subsection~\ref{sec:UpperBoundsATL}, we show that given a finite \CGS\ $\GS$ and a parity $\ACG$ $\Au$, checking that no strategy tree of
$\GS$ is accepted by $\ACG$ can be done in time singly exponential in the size of $\Au$ and polynomial in the size of $\GS$.

\subsection{From  \ATLStar\ to parity \ACG}\label{sec:translationLogics}

First, we recall the class of parity \ACG\ \cite{ScheweF06}.
For a  set $X$,
$\B^{+}(X)$   denotes the set
of \emph{positive} Boolean formulas over $X$, i.e. Boolean formulas  built from elements in $X$
using $\vee$ and $\wedge$. % (we also allow the formulas $\true$ and
%$\false$).

A parity \ACG\ over $2^{\Prop}$ and \Agents\ is a tuple
 $\Au=\tpl{Q,q_0,\delta,\alpha}$, where
 $Q$ is a finite set of states, $q_0\in Q$ is the initial state,
  $\delta:Q\times 2^{\Prop}\rightarrow \B^{+}(Q\times \{\Box,\Diamond\}\times 2^{\Agents})$ is the transition function, and $\alpha: Q \mapsto \Nat$ is a parity acceptance condition over $Q$ assigning to each state a color.
The transition function $\delta$ maps a state and an input letter to a positive Boolean combination of universal atoms $(q,\Box,A)$ which refer to \emph{all} successors states for some available $A$-decision, and existential atoms $(q,\Diamond,A)$ which refer to \emph{some} successor state for all available $A$-decisions.
The \emph{index} of $\Au$ is the number of colors in $\alpha$, i.e., the cardinality of $\alpha(Q)$. The size $|\Au|$ of $\Au$ is $|Q|+ |\Atoms(\Au)|$, where
$\Atoms(\Au)$ is the set of atoms  of $\Au$, i.e. the set of tuples in $Q\times \{\Box,\Diamond\}\times 2^{\Agents}$ occurring in the transition function $\delta$ of $\Au$.

 We interpret the parity \ACG\ $\Au$ over $\CGT$. Given a $\CGT$ $\Tau =\tpl{T,\varepsilon,\Lab,\Trans}$ over $\Prop$ and $\Agents$,
 a run of $\Au$ over $\Tau$ is a   $(Q\times T)$-labeled  $\Nat$-tree
$r=\tpl{T_r,\Lab_r}$, where each node  of $T_r$ labelled by $(q,\nu)$ describes a copy of the automaton that is in the state $q$ and reads the node $\nu$ of
$T$. Moreover, we require that $r(\varepsilon)=(q_0,\varepsilon)$ (initially, the automaton is in state $q_0$ reading the root node), and for each  $y\in T_r$ with $r(y)=(q,\nu)$, there is a set $H\subseteq Q\times \{\Box,\Diamond\}\times 2^{\Agents}$ such that $H$ is  model of
$\delta(q,\Lab(\nu))$ and the set $L$ of labels associated with the  children of $y$ in $T_r$ minimally satisfies the following conditions:
\begin{itemize}
  \item for all universal atoms $(q',\Box,A)\in H$, there is an available $A$-decision $d_A$ in the node $\nu$ of $\Tau$ such that for all the children $\nu'$ of $\nu$
  which are consistent with  $d_A$, $(q',\nu')\in L$;
  \item for all existential atoms $(q',\Diamond,A)\in H$ and for all available $A$-decisions $d_A$ in the node $\nu$ of $\Tau$, there is some child $\nu'$ of $\nu$
  which is consistent with $d_A$ such that $(q',\nu')\in L$.
\end{itemize}

The run $r$ is accepting if for all infinite paths $\pi$ starting from the root, the highest color of the states appearing infinitely often along $\Lab_r(\pi)$ is even. The language $\Lang(\Au)$ accepted by $\Au$ consists of the $\CGT$ $\Tau$ over $\Prop$ and $\Agents$ such that there is an accepting run of $\Au$
over $\Tau$.

It is well-known that \ATLStar\ satisfiability has the same complexity as \CTLStar\ satisfiability, i.e., it is \TWOEXPTIME-complete~\cite{Schewe08}. In particular,
given an \ATLStar\ state formula $\varphi$, one can construct in singly exponential time a parity \ACG\ accepting the set of \CGT\ satisfying some special requirements (depending on $\varphi$)
which provide a necessary and sufficient condition for ensuring the existence of some model of $\varphi$~\cite{Schewe08}. These requirements are based on an equivalent representation
 of the models of a formula obtained by a sort of widening operation. When applied to the strategy trees of a finite \CGS, such an encoding is not regular
 since one has to require that for all nodes in the encoding which are copies of the same environment node in the given strategy tree, the associated subtrees are isomorphic.
 Hence, the approach exploited in~\cite{Schewe08} cannot be applied to the module-checking setting. Here, by adapting the construction in~\cite{Schewe08}, we provide a double exponential-time translation of \ATLStar\ state formulas  into equivalent parity \ACG.
In particular, we establish the following result, where for a finite set $B$ disjunct from $\Prop$ and a  $\CGT$ $\Tau =\tpl{T,\varepsilon,\Lab,\Trans}$ over $\Prop$, a
 $B$-labeling extension of $\Tau$ is a \CGT\ over $\Prop\cup B$ of the form  $\tpl{T,\varepsilon,\Lab',\Trans}$, where $\Lab'(\nu)\cap \Prop=\Lab(\nu)$ for all $\nu\in T$.

 \begin{theorem}\label{theo:TranslationATLStar} For an \ATLStar\ state formula $\Phi$ over $\Prop$, one can construct in doubly exponential time a parity  $\ACG$ $\Au_\Phi$  over $2^{\Prop\cup B_\Phi}$, where $B_\Phi$ is the set of basic subformulas of $\Phi$, such that for all
$\CGT$ $\Tau$ over $\Prop$, $\Tau$ is a model of $\Phi$ iff there exists a $B_\Phi$-labeling extension of $\Tau$ which is accepted by
 $\Au_\Phi$.  Moreover, $\Au_\Phi$ has
size $O( 2^{2^{O(|\Phi|\cdot \log(|\Phi|))}})$ and  index $2^{O(|\Phi|)}$.
\end{theorem}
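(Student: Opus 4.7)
The plan is to follow the automata-theoretic recipe pioneered for \CTLStar\ in~\cite{KV96}, lifted to the strategic setting using the counter-strategy characterisation of $\Forall{A}$ recalled in the Remark above. Let $B_\Phi$ denote the set of basic subformulas of $\Phi$. For every $\psi=\Exists{A}\varphi\in B_\Phi$ I view the inner path formula $\varphi$ as an \LTL\ formula over the extended alphabet $\Prop\cup B_\Phi$, obtained by replacing each proper basic subformula occurring inside $\varphi$ by its own fresh atom in $B_\Phi$. The parity \ACG\ $\Au_\Phi$ will read $B_\Phi$-labelled \CGT\ and enforce two conditions: (i) the Boolean evaluation of $\Phi$ using the $B_\Phi$-labelling of the root yields $\true$; and (ii) at every node $\nu$ and every $\psi=\Exists{A}\varphi\in B_\Phi$, the atom $\psi$ occurs in $\Lab(\nu)$ iff agents $A$ admit from $\nu$ a strategy all of whose outcomes satisfy $\varphi$ interpreted as an \LTL\ formula over $\Prop\cup B_\Phi$.

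For each $\psi=\Exists{A}\varphi\in B_\Phi$, I precompute deterministic parity word automata $\D_\psi$ and $\D_{\bar\psi}$ accepting respectively the models of $\varphi$ and $\neg\varphi$, via the standard Vardi--Wolper translation to \NWA\ followed by Safra/Piterman determinisation; each has size $2^{2^{O(|\varphi|\log|\varphi|)}}$ and $2^{O(|\varphi|)}$ priorities. The state space of $\Au_\Phi$ is the disjoint union of a polynomial-size component that Boolean-evaluates $\Phi$ on the root label, together with one copy of $\D_\psi$ and one of $\D_{\bar\psi}$ for each $\psi\in B_\Phi$. The transition function is designed so that, on reading a node $\nu$ whose $B_\Phi$-label contains $\psi=\Exists{A}\varphi$, the \ACG\ spawns a universal atom $(q_0^{\D_\psi},\Box,A)$, requiring the existence of an $A$-decision such that every $A$-consistent successor continues the simulation of $\D_\psi$; symmetrically, when $\psi\notin\Lab(\nu)$ (so $\Forall{A}\neg\varphi$ should hold at $\nu$), the \ACG\ spawns an existential atom $(q_0^{\D_{\bar\psi}},\Diamond,A)$, whose semantics is precisely that of a counter-strategy for $A$ by the Remark. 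Within each spawned simulation, every deterministic transition $q\mapsto q'$ of the underlying \DWA\ is mimicked by a single $\Box$ (resp.\ $\Diamond$) atom on the same group $A$, so that the strategic commitment made at the spawning node is preserved along the whole simulation; the priorities of the \DWA\ lift directly into the acceptance condition of $\Au_\Phi$.

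Correctness is established by induction on the nesting depth of basic subformulas: the Boolean component handles the root, while each spawned \DWA-simulation accepts exactly the $A$-strategy outcome trees for $\varphi$ (resp.\ the counter-strategy outcome trees for $\neg\varphi$), the latter by the Remark. For the ``if'' direction, an accepting run yields witness strategies and counter-strategies that certify the guessed labelling is faithful to the \ATLStar\ semantics of $\Phi$; for the ``only if'' direction, the canonical labelling that marks each node with precisely the basic subformulas holding there is shown to admit an accepting run. The size bound $O(2^{2^{O(|\Phi|\log|\Phi|)}})$ follows by summing $|\D_\psi|+|\D_{\bar\psi}|$ over $\psi\in B_\Phi$, and the index bound $2^{O(|\Phi|)}$ is the maximum index across the deterministic word automata.

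The main obstacle I foresee is avoiding a global product construction over $B_\Phi$, which would inflate the state space by an extra exponential and destroy the size bound. The fix exploits the alternating nature of \ACG: different basic subformulas can be verified in parallel branches of the run tree, and within a single branch, the $\Box$/$\Diamond$ atom spawned at the origin of a \DWA-simulation commits coherently to a single strategy (or counter-strategy) for the fixed group $A$, while other basic subformulas at the same node are verified by independently-chosen witnesses on disjoint branches. A subsidiary technical point, handled via the standard memoryless-determinacy argument for parity games underpinning \ACG\ acceptance, is that the witness strategies extracted from an accepting run can be realised as bona fide \ATLStar\ strategies (functions of tracks of the input tree) rather than of the run's internal branching structure.
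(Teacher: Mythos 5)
Your proposal matches the paper's own proof in all essentials: both reduce model-hood of $\Phi$ to the existence of a $B_\Phi$-labeling extension that is ``well-formed'' (a node carries $\Exists{A}\psi$ only if witnessed by a strategy, and omits it only if witnessed by a counter-strategy, which suffices by the determinacy Remark), and both check well-formedness with a parity \ACG\ that simulates Safra-determinized word automata for $[\psi]_\LTL$ and $[\neg\psi]_\LTL$ via $\Box$- and $\Diamond$-atoms over the group $A$, with determinization being the source of the second exponential and yielding the same size and index bounds. The construction and complexity accounting are the same as the paper's; no substantive differences.
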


We now illustrate the proof of Theorem~\ref{theo:TranslationATLStar}. For an  $\ATLStar$ formula $\varphi$ over $\Prop$, a \emph{first-level basic subformula} of $\varphi$ is a basic subformula of $\varphi$ for which there is an occurrence in $\varphi$ which is not in the scope of any strategy quantifier. Note that an  \ATLStar\ formula $\varphi$ can be seen as a standard \LTL\ formula~\cite{Pnueli77}, denoted $[\varphi]_\LTL$, over the set $\Prop$ augmented with the set of first-level basic subformulas of $\varphi$. In particular, if $\varphi$ is  a state formula, then $[\varphi]_\LTL$ is a propositional formula.
 Fix an $\ATLStar$ state formula $\Phi$ over $\Prop$, and let $B_\Phi$ be the set of basic subformulas of $\Phi$.
Given a basic subformula  $\Exists{A}\psi\in B_\Phi$  and a $\CGT$ $\Tau=\tpl{T,\varepsilon,\Lab_\Phi,\Trans}$ over
$\Prop\cup B_\Phi$, $\Tau$ is \emph{positively} (resp., \emph{negatively})
\emph{well-formed with respect to }$\Exists{A}\psi$ if:
 \begin{itemize}
   \item for all nodes $\nu\in T$ such that $\Exists{A}\psi\in \Lab_\Phi(\nu)$ (resp., $\Exists{A}\psi\notin \Lab_\Phi(\nu)$), there exists a
   strategy $f_A$ (resp., counter strategy $f_A^{c}$) in $\Tau$ for the set $A$ of agents   such that for all plays $\pi$ in $\Tau$
   starting from $\nu$ which are consistent with $f_A$ (resp., $f_A^{c}$), it holds that $\Lab_\Phi(\pi)$ is a model of the \LTL\  formula
   $[\psi]_\LTL$ (resp., $[\neg\psi]_\LTL$).
 \end{itemize}

\noindent The  $\CGT$ $\Tau$ is \emph{well-formed with respect to} $\Phi$ if:
%\begin{itemize}
%   \item
 (i) for all basic subformulas $\Exists{A}\psi\in B_\Phi$, $\Tau$ is both positively and negatively well-formed w.r.t.  $\Exists{A}\psi\in B_\Phi$, and (ii)
 $\Lab_\Phi(\varepsilon)$ is a model of the propositional formula  $[\Phi]_\LTL$.
% \end{itemize}
%\noindent
 The following proposition easily follows from the semantics of $\ATLStar$ and the remark at the end of Section~\ref{sec:LogicsATL}.

\begin{proposition}\label{prop:WelFormedness} Given a $\CGT$ $\Tau$ over $\Prop$, $\Tau$ is a model of
$\Phi$ iff there exists a $B_\Phi$-labeling extension of $\Tau$ which is well-formed w.r.t. $\Phi$.
\end{proposition}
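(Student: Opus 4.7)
The plan is to prove both implications by relating well-formedness of the extended labeling to the semantics of the strategic quantifiers, using induction on the nesting depth of basic subformulas. Recall that if $\varphi$ is an \ATLStar\ (sub)formula over $\Prop$, treating its first-level basic subformulas as fresh propositions turns it into a pure \LTL\ formula $[\varphi]_\LTL$; and by the remark from \cite{Schewe08}, $\Forall{A}\psi$ holds at a state iff some counter strategy forces $\psi$, which is exactly the dual witness used in the negative well-formedness condition.

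For the ``if'' direction, I would take a $B_\Phi$-labeling extension $\Tau' = \tpl{T,\varepsilon,\Lab_\Phi,\Trans}$ that is well-formed w.r.t.\ $\Phi$ and prove by induction on the nesting depth of a basic subformula $\Exists{A}\psi \in B_\Phi$ the claim:
\[
\text{for every node } \nu\in T,\quad \Exists{A}\psi\in \Lab_\Phi(\nu)\ \Longleftrightarrow\ \Tau,\nu\models \Exists{A}\psi.
\]
The inductive hypothesis gives that for every first-level basic subformula $\chi$ of $\psi$ and every node $\mu$, membership of $\chi$ in $\Lab_\Phi(\mu)$ coincides with $\Tau,\mu\models \chi$. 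Hence, along any play $\pi$ starting from $\nu$, $\Lab_\Phi(\pi)$ satisfies the \LTL\ formula $[\psi]_\LTL$ iff $\Tau,\pi\models \psi$ (and similarly for $[\neg\psi]_\LTL$ and $\neg\psi$). Positive well-formedness at $\nu$ with $\Exists{A}\psi\in\Lab_\Phi(\nu)$ then yields a strategy $f_A$ all of whose outcomes satisfy $\psi$, giving $\Tau,\nu\models \Exists{A}\psi$; negative well-formedness at $\nu$ with $\Exists{A}\psi\notin \Lab_\Phi(\nu)$ yields, via the Schewe equivalence, a counter strategy forcing $\neg\psi$, i.e.\ $\Tau,\nu\models\Forall{A}\neg\psi$, hence $\Tau,\nu\not\models\Exists{A}\psi$. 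Applying the claim to the root and condition (ii) of well-formedness ($\Lab_\Phi(\varepsilon)\models[\Phi]_\LTL$) gives $\Tau\models \Phi$.

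For the ``only if'' direction, assume $\Tau\models \Phi$ and define the canonical extension $\Lab_\Phi$ by
\[
\Lab_\Phi(\nu) \;=\; \Lab(\nu)\ \cup\ \{\,\Exists{A}\psi\in B_\Phi \,\mid\, \Tau,\nu\models \Exists{A}\psi\,\}.
\]
Then condition (ii) holds at the root because $\Tau\models \Phi$ and $[\Phi]_\LTL$ is a propositional formula over $\Prop\cup B_\Phi$ whose evaluation on $\Lab_\Phi(\varepsilon)$ agrees with the semantic evaluation of $\Phi$ at $\varepsilon$. For condition (i), pick any $\Exists{A}\psi\in B_\Phi$ and any node $\nu$. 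If $\Exists{A}\psi\in\Lab_\Phi(\nu)$, the semantics yields a strategy $f_A$ witnessing $\psi$ at $\nu$; since the labeling is canonical, for every first-level basic subformula $\chi$ of $\psi$ and every play $\pi$, $\Lab_\Phi(\pi)$ reflects the truth of $\chi$, so the outcomes of $f_A$ satisfy $[\psi]_\LTL$ as required. If $\Exists{A}\psi\notin\Lab_\Phi(\nu)$, then $\Tau,\nu\models\Forall{A}\neg\psi$, and the remark from \cite{Schewe08} supplies a counter strategy $f_A^c$ whose outcomes satisfy $\neg\psi$, which the canonical labeling again encodes as $[\neg\psi]_\LTL$.

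The only subtle point I expect is ensuring that the inductive step correctly lifts agreement between the labeling and semantics from nested basic subformulas to outer \LTL\ evaluations along plays; this is handled by processing basic subformulas in order of increasing strategic nesting depth, so that by the time $\Exists{A}\psi$ is analysed, every basic subformula appearing inside $\psi$ has already been shown to be labeled faithfully, making $[\psi]_\LTL$ on $\Lab_\Phi(\pi)$ equivalent to $\psi$ on $\pi$. Once this agreement is established, both directions follow directly from the definitions.
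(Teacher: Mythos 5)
Your proof is correct and follows exactly the route the paper intends: the paper dispatches this proposition as "easily follows from the semantics of \ATLStar\ and the remark" on counter strategies, and your argument is precisely that — the canonical labeling for the "only if" direction, the Schewe counter-strategy characterization of $\Forall{A}$ for the negative well-formedness conditions, and induction on strategic nesting depth to transfer agreement between $[\psi]_\LTL$ on the extended labeling and the semantics of $\psi$. The details you supply (in particular the ordering of basic subformulas by nesting depth) are the standard ones the paper leaves implicit.
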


We show the following result that together with Proposition~\ref{prop:WelFormedness} provides a proof of Theorem~\ref{theo:TranslationATLStar}.

\newcounter{theo-FormulaBadFGNested}
\setcounter{theo-FormulaBadFGNested}{\value{theorem}}

 \begin{theorem}\label{theo:ACGforATLStar} Given an $\ATLStar$ state formula $\Phi$, one can construct in time doubly exponential in the size of $\Phi$, a parity  $\ACG$ $\Au_\Phi$  over $2^{\Prop\cup B_\Phi}$
 accepting the set of $\CGT$ over $\Prop\cup B_\Phi$ which are well-formed w.r.t. $\Phi$.   Moreover, $\Au_\Phi$ has
size $O( 2^{2^{O(|\Phi|\cdot \log(|\Phi|))}})$ and  index $2^{O(|\Phi|)}$.
\end{theorem}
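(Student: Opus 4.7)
The plan is to assemble $\Au_\Phi$ as a parallel composition of: (a) a constant-size ``main'' component that visits every node of the input $\CGT$, and (b) for each basic subformula $\Exists{A_i}\psi_i\in B_\Phi$, two independent \emph{checkers}---a positive one, spawned whenever the basic subformula appears in the current node's label, and a negative one, spawned otherwise. The alternating atom structure of a parity $\ACG$, in which $(q,\Box,A)$ encodes ``some available $A$-decision sends every consistent child to $q$'' and $(q,\Diamond,A)$ encodes its dual, is precisely what allows an accepting run to internalize a strategy (resp.\ counter strategy) for the coalition $A$; positional determinacy of parity games on the resulting arenas will then connect these runs back to genuine (counter) strategies in the sense of Section~\ref{sec:Preliminaries}.

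For each basic subformula $\Exists{A_i}\psi_i \in B_\Phi$ I would first view $[\psi_i]_\LTL$ and $[\neg\psi_i]_\LTL$ as $\LTL$ formulas over the extended alphabet $2^{\Prop\cup B_\Phi}$, translate each into a nondeterministic B\"uchi word automaton of size $2^{O(|\psi_i|)}$, and apply Safra--Piterman determinization to obtain deterministic parity word automata $\D_{\psi_i}^+$ and $\D_{\psi_i}^-$ of size $2^{2^{O(|\psi_i|\log|\psi_i|)}}$ and index $2^{O(|\psi_i|)}$. I would then lift each DPA into an $\ACG$ fragment by defining the positive checker's transition to be $\delta_+(q,\sigma)=(\delta_{\D_{\psi_i}^+}(q,\sigma),\Box,A_i)$ and the negative checker's transition to be $\delta_-(q,\sigma)=(\delta_{\D_{\psi_i}^-}(q,\sigma),\Diamond,A_i)$, inheriting the DPA's parity coloring in each case. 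By the remark at the end of Section~\ref{sec:LogicsATL}, the $\Diamond$ variant captures exactly the existence of a counter strategy for $A_i$ forcing $[\neg\psi_i]_\LTL$.

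The main component is a single state $q_{\mathsf{m}}$ with an even parity color (so that infinite runs staying inside it are vacuously accepting) and transition
\[ \delta(q_{\mathsf{m}},\sigma)\;=\;(q_{\mathsf{m}},\Box,\emptyset)\;\wedge\;\bigwedge_{\varphi_i\in\sigma\cap B_\Phi}(\delta_{\D_{\psi_i}^+}(q_{0,i}^+,\sigma),\Box,A_i)\;\wedge\;\bigwedge_{\varphi_i\in B_\Phi\setminus\sigma}(\delta_{\D_{\psi_i}^-}(q_{0,i}^-,\sigma),\Diamond,A_i), \]
where $\varphi_i=\Exists{A_i}\psi_i$ and the DPA transition is applied once to $\sigma$ so that each spawned checker is already synchronized with the label at its spawning node. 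The atom $(q_{\mathsf{m}},\Box,\emptyset)$ propagates $q_{\mathsf{m}}$ along every branch, ensuring that the spawn-and-check step is performed at every node of $\Tau$, which corresponds to condition~(i) of well-formedness. The initial state $q_0$ is identical to $q_{\mathsf{m}}$ except that its transition is additionally gated by the propositional formula $[\Phi]_\LTL$ on the root label, capturing condition~(ii).

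The main technical obstacle is proving correctness of the negative checker: I need to show that the existence of an accepting $\Diamond$-subrun at a node $\nu$ is equivalent to the existence of a counter strategy for $A_i$ in $\Tau$ rooted at $\nu$ whose every outcome is accepted by $\D_{\psi_i}^-$. The argument views the product of $\Tau$ with $\D_{\psi_i}^-$ as a parity game in which the $\Diamond$-player responds to each available $A_i$-decision of the adversary by choosing an $(\Agents\setminus A_i)$-decision; positional determinacy of parity games then yields a positional winning strategy that can be repackaged as a counter strategy $f^c_{A_i}$ on tracks, and conversely any such counter strategy induces an accepting run. The complexity bounds follow by routine accounting: there are $O(|\Phi|)$ basic subformulas, each contributing an $\ACG$ fragment of size $2^{2^{O(|\Phi|\log|\Phi|)}}$ and index $2^{O(|\Phi|)}$, so $\Au_\Phi$ has the size and index promised in the statement and can be built in doubly exponential time overall.
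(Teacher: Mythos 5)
Your proposal is correct and follows essentially the same route as the paper: both handle each basic subformula $\Exists{A}\psi$ by building a deterministic parity word automaton for $[\psi]_\LTL$ and for $[\neg\psi]_\LTL$ (via the \LTL-to-B\"uchi translation followed by determinization, which is exactly where the second exponential arises because the local choices of an \ACG\ force determinism), and then embedding these into the \ACG\ through $\Box$-atoms (simulating a strategy for $A$) and $\Diamond$-atoms (simulating a counter strategy for $A$, justified by the remark at the end of Section~\ref{sec:LogicsATL}), with a universally propagated main component gated at the root by $[\Phi]_\LTL$. The only cosmetic difference is your appeal to positional determinacy to relate accepting runs to (counter) strategies, which is not needed since nodes of the input tree are in bijection with tracks, but it is harmless.
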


In order to prove Theorem~\ref{theo:ACGforATLStar}, we exploit
the well-known translation of \LTL\  into  B\"{u}chi nondeterministic
word automata (B\"{u}chi \NWA)~\cite{VardiW94}.
In particular, given an $\LTL$ formula $\psi$, one can construct in singly exponential time a B\"{u}chi  \NWA\  accepting the set of infinite words which are models of $\psi$~\cite{VardiW94}.
In order to handle a basic subformula of the form $\Exists{\Agents}\psi$ and its negation  ($\Exists{\Agents}$  and $\neg\Exists{\Agents}$ correspond to the existential and universal path quantifiers of $\CTLStar$), it suffices to use the B\"{u}chi \NWA\ $\Au_\psi$ associated with $[\psi]_\LTL$ and the dual
$\tilde{\Au}_\psi$ of $\Au_\psi$, respectively ($\tilde{\Au}_\psi$ is a universal co-B\"{u}chi word automaton).
Indeed, for checking that $\Exists{\Agents}\psi$ holds at the current node $\nu$ of the input, the \ACG\ simply guesses an infinite path $\pi$ from $\nu$ and simulates a run
of $\Au_\psi$ over the labeling of $\pi$, and checks that it is accepting by using its parity acceptance condition.
Similarly, for the formula  $\neg\Exists{\Agents}\psi$, the \ACG\ simulates the  universal co-B\"{u}chi word automaton
$\tilde{\Au}_\psi$ for checking that all the plays starting from $\nu$  satisfy the \LTL\ formula $[\neg\psi]_\LTL$. This reasoning is the key
for translating $\CTLStar$ formulas into equivalent parity alternating tree automata with a single exponential blowup~\cite{KVW00}.
However, for handling more general basic subformulas $\Exists{A}\psi$ and their negations, we need to use deterministic word automata
for the  \LTL\ formulas $[\psi]_\LTL$ and $[\neg\psi]_\LTL$. This because the choices of an $\ACG$ are local, and the set  of
plays starting from the current input node which are consistent with a strategy (resp., counter strategy) of $A$ may be infinite and properly contained in the set of all the plays starting from $\nu$. The determinization of a B\"{u}chi \NWA\ involves an additional exponential blowup~\cite{Safra88}.

\subsection{Upper bounds for \ATL\ and \ATLStar\ module checking}\label{sec:UpperBoundsATL}

In this section, we establish the following result.

 \begin{theorem}\label{theo:EmptinessACG} Given a \CGS\ $\GS$ over $\Prop$, a finite set $B$ disjunct from $\Prop$, and a parity \ACG\ $\Au$ over $2^{\Prop\cup B}$,
  checking whether there are no $B$-labeling extensions of strategy trees of $\GS$ accepted by $\Au$ can be done in time singly exponential in the size of $\Au$ and polynomial in the size of $\GS$.
\end{theorem}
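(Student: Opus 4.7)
The plan is to reduce the problem to deciding the winner of a parity game $G_{\GS,\Au_n}$ of size polynomial in $|\GS|$ and in the size of a suitable nondeterminization $\Au_n$ of $\Au$, with parity index linear in that of $\Au$. The construction proceeds in three main steps: preprocess $\Au$ into a parity nondeterministic concurrent game automaton, build the resulting synthesis game, and solve it.

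First, I would convert the parity \ACG\ $\Au$ into an equivalent parity nondeterministic concurrent game automaton $\Au_n$ by extending standard Safra-style determinization/nondeterminization for alternating parity tree automata to the concurrent game setting. The transformation entails a singly exponential blow-up in $|\Au|$ while only polynomially enlarging the parity index. The crucial feature of $\Au_n$ is that its runs are linear --- each input tree node is visited by exactly one automaton copy --- so any on-the-fly guess of a $B$-labeling or of a pruning remains automatically consistent per tree node. This side-steps the main obstacle of the direct alternating construction, where the pruning at an environment node could otherwise be chosen differently by different copies of $\Au$ visiting that node, precluding the existence of a single underlying strategy tree.

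Second, I would set up the two-player parity game $G_{\GS,\Au_n}$ with main positions $(q,s) \in Q_n \times \States$ (plus auxiliary positions for the elementary moves). From $(q,s)$, Eve first guesses a $B$-label $b \subseteq B$ and, if $s$ is an environment state, a nonempty successor set $P \subseteq \mathrm{succ}_\GS(s)$ implementing the pruning; the remaining moves implement acceptance of $\Au_n$ on the resulting labeled, pruned child structure: the positive-Boolean game on the transition formula (Eve chooses at $\vee$, Adam at $\wedge$); for a $\Box$-atom, Eve commits to an available $A$-decision $d_A$ and Adam picks a child $s'$ consistent with $d_A$; for a $\Diamond$-atom, the roles on $d_A$ and $s'$ are swapped. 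Positions inherit their parity color from $\alpha$ on the $q$-component. The size of $G_{\GS,\Au_n}$ is polynomial in $|\GS| \cdot |\Au_n|$, hence polynomial in $|\GS|$ and singly exponential in $|\Au|$.

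Finally, I would argue correctness --- Eve wins $G_{\GS,\Au_n}$ iff there exist $\Tau \in \exec(\GS)$ and a $B$-labeling extension of $\Tau$ accepted by $\Au$ --- by the standard two-sided argument: an accepting run of $\Au$ on $(\Tau,\lambda)$ immediately yields a winning Eve strategy, while a memoryless winning Eve strategy, together with the linearity of runs of $\Au_n$, determines a pruning, a labeling, and an accepting run of $\Au_n$. The original problem is then equivalent to Adam winning $G_{\GS,\Au_n}$, which is decided by standard parity-game algorithms in time $|G_{\GS,\Au_n}|^{O(k)}$ where $k$ is the index of the game; this yields a total time of $2^{O(|\Au|)} \cdot |\GS|^{O(|\Au|)}$, namely singly exponential in $|\Au|$ and polynomial in $|\GS|$ (for fixed $|\Au|$), as claimed. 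The main obstacle throughout is the preprocessing step, which requires adapting classical alternating-to-nondeterministic conversions (Muller--Schupp, Safra) to the ACG setting while preserving the parity index up to a polynomial factor.
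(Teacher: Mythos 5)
Your proposal is correct and follows essentially the same route as the paper: the crux in both is a nondeterminization of the parity \ACG\ whose exponential blowup depends only on $|\Au|$ and not on the number of directions/states of $\GS$ (the paper's Theorem~\ref{Theor:FromACGToNTA}, obtained by adapting the two-way-\ATA-to-\NTA\ construction and exploiting that an \ACG's transitions are direction-independent), after which the pruning and the $B$-labeling are resolved existentially and the whole thing is decided as a parity game of size polynomial in $|\GS|$ and singly exponential in $|\Au|$. The paper packages this as emptiness of a parity \NTA\ running on $\bot$-completion encodings of strategy trees rather than as an explicit product game, but that is only a presentational difference.
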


By~\cite{Schewe08}, $\ATL$ can be translated in linear time into equivalent parity \ACG\ of index $2$.
Thus, by Theorem~\ref{theo:TranslationATLStar} and Theorem~\ref{theo:EmptinessACG}, and since the \CTL\ module-checking problem is \EXPTIME-complete, and \PTIME-complete for a fixed \CTL\ formula,  we obtain the following corollary.

\begin{corollary} The $\ATLStar$ module-checking problem is in \THREEXPTIME\ while the $\ATL$ module-checking problem is \EXPTIME-complete. Moreover, for a fixed
$\ATLStar$ state formula (resp., $\ATL$ formula), the module-checking problem is \PTIME-complete.
\end{corollary}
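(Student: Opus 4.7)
The plan is to derive the corollary by composing Theorem~\ref{theo:TranslationATLStar}, Theorem~\ref{theo:EmptinessACG}, the known linear-time translation of $\ATL$ into parity $\ACG$ of index $2$ from~\cite{ScheweF06}, and the known complexity of $\CTL$ module checking. The key observation, as recalled at the start of Section~\ref{sec:DecisionProcedures}, is that $\GS\models^{r}\varphi$ iff \emph{no} strategy tree of $\GS$ is accepted by $\Au_{\neg\varphi}$; combined with Theorem~\ref{theo:TranslationATLStar}, this becomes: $\GS\models^{r}\varphi$ iff no strategy tree of $\GS$ admits a $B_{\neg\varphi}$-labeling extension accepted by $\Au_{\neg\varphi}$, which is precisely the question handled by Theorem~\ref{theo:EmptinessACG}.

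For the $\ATLStar$ upper bound, I would first apply Theorem~\ref{theo:TranslationATLStar} to $\neg\varphi$ to produce in doubly exponential time a parity $\ACG$ $\Au_{\neg\varphi}$ of size $O(2^{2^{O(|\varphi|\log|\varphi|)}})$, and then invoke Theorem~\ref{theo:EmptinessACG}, whose running time is singly exponential in $|\Au_{\neg\varphi}|$ and polynomial in $|\GS|$. Composing a single exponential with a double exponential yields triple-exponential time overall, placing $\ATLStar$ module checking in \THREEXPTIME. For the $\ATL$ upper bound, I would instead use the linear-time translation of~\cite{ScheweF06} to obtain a parity $\ACG$ of size $O(|\varphi|)$ and fixed index $2$; Theorem~\ref{theo:EmptinessACG} then gives an algorithm running in time exponential in $|\varphi|$ and polynomial in $|\GS|$, hence in \EXPTIME.

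For the matching lower bound for $\ATL$, I would note that $\CTL$ is a syntactic fragment of $\ATL$ (obtained by restricting the strategic quantifiers to $\Exists{\Agents}$ and $\Exists{\emptyset}$, as remarked in Subsection~\ref{sec:LogicsATL}), so the \EXPTIME-hardness of $\CTL$ module checking~\cite{KV96} transfers directly to $\ATL$ module checking. Finally, for the fixed-formula claim, I would observe that when $\varphi$ is fixed, $\Au_{\neg\varphi}$ has constant size (in both the $\ATL$ and $\ATLStar$ case), so Theorem~\ref{theo:EmptinessACG} yields a polynomial-time algorithm in $|\GS|$; the matching \PTIME-hardness again follows from the corresponding \PTIME-hardness of $\CTL$ module checking for a fixed formula~\cite{KV96}, via the same syntactic inclusion.

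The only mildly delicate point is bookkeeping: one must make sure that the $B_{\neg\varphi}$-labeling extensions introduced by Theorem~\ref{theo:TranslationATLStar} are compatible with the notion of strategy tree used by Theorem~\ref{theo:EmptinessACG}, so that ``no strategy tree of $\GS$ satisfies $\neg\varphi$'' coincides with ``no $B_{\neg\varphi}$-labeling extension of a strategy tree of $\GS$ is accepted by $\Au_{\neg\varphi}$''; this is immediate since $B_{\neg\varphi}$-labeling extensions only enrich node labels without altering the tree structure or the agents' available decisions. Beyond this, the proof is a routine composition of the upper-bound machinery of Subsections~\ref{sec:translationLogics}--\ref{sec:UpperBoundsATL} with the classical $\CTL$ module-checking bounds of~\cite{KV96}, so no significant obstacle arises at this stage.
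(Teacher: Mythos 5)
Your proposal is correct and follows essentially the same route as the paper: the paper's own justification is exactly the composition of Theorem~\ref{theo:TranslationATLStar} (applied to the negated formula) with Theorem~\ref{theo:EmptinessACG}, the linear-time translation of \ATL\ into parity \ACG\ of index $2$, and the transfer of the \EXPTIME- and \PTIME-hardness of \CTL\ module checking via the syntactic inclusion of \CTL\ in \ATL. Your additional remarks on the $B_{\neg\varphi}$-labeling bookkeeping and on why the exponentials compose to \THREEXPTIME\ just make explicit what the paper leaves implicit.
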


In Section~\ref{sec:LowerBound}, we provide a lower bound for the $\ATLStar$ module-checking problem matching the upper bound in the corollary above. We now illustrate the proof of Theorem~\ref{theo:EmptinessACG}. We assume that the set $B$ in the statement of Theorem~\ref{theo:EmptinessACG} is empty (the general case where $B\neq \emptyset$ is similar).
Let  $\GS =\tpl{\States,s_0,\Lab,\Trans}$ be a finite \CGS\ over $\Prop$. Note that the transition function $\Trans'$ of a strategy tree $\Tau =\tpl{T,\varepsilon,\Lab',\Trans'}$ of $\GS$ is completely determined by $T$ and the transition function $\Trans$ of $\GS$. Hence, for the fixed $\CGS$ $\GS$, $\Tau$ can be simply specified by the underlying $2^{\Prop}$-labeled tree $\tpl{T,\Lab'}$. We consider an equivalent representation of $\tpl{T,\Lab'}$ by the $(2^{\Prop}\cup\{\bot\})$-labeled \emph{complete} $\States$-tree $\tpl{\States^{*},\Lab_\bot}$, called the
$\bot$-completion encoding of $\Tau$ ($\bot$ is a fresh proposition  used to denote ``completion" nodes), defined as: for each concrete node $\nu\in T$, $\Lab_\bot(\nu)=\Lab'(\nu)$, while for each completion node $\nu\in \States^{*}\setminus T$, $\Lab_\bot(\nu)=\{\bot\}$.

By the semantics of $\ACG$, given a parity $\ACG$ $\Au$ with $n$ states and index $k$, we can easily construct in polynomial time a standard parity alternating tree automaton (\ATA) $\Au_\GS$ over the alphabet $\States\times (2^{\Prop}\cup\{\bot\})$ and the set $\States$ of directions, having $O(n)$-states and index $k$, accepting the set
of $\States\times (2^{\Prop}\cup\{\bot\})$-labeled complete $\States$-trees $\tpl{\States^{*},\Lab}$ such that for each $\nu\in T$, the $\States$-label of $\nu$ coincides with the direction $\lst(\nu)$, and the labeled tree obtained from $\tpl{\States^{*},\Lab}$ by removing the $\States$-labeling component is the $\bot$-completion encoding of a strategy tree of  $\GS$ accepted by $\Au$. However, this approach has an inconvenient. Indeed, in order to check emptiness of the parity $\ATA$ $\Au_\GS$, one first construct an equivalent parity nondeterministic tree automaton (\NTA) $\Au'_\GS$, and then check for emptiness of $\Au'_\GS$. By~\cite{EJ88,Var98}, $\Au'_\GS$ has index polynomial in the size of the $\ACG$ $\Au$, and number of states which is singly exponential both in the size of $\Au$ and in the number of directions, which in our case, coincides with the number of $\GS$-states.
We show that due to the form of the transition function of an $\ACG$ (it is independent of the set of directions), the exponential blowup in the number of $\GS$-states can be
avoided. In particular, by adapting the construction provided in~\cite{Var98} for converting parity two-way \ATA\ into equivalent parity \NTA, we provide a direct translation into parity \NTA\ as established in the following Theorem~\ref{Theor:FromACGToNTA}. Since nonemptiness of parity \NTA\ with $n$ states and index $k$ can be solved in time
$O(n^{k})$~\cite{KupfermanV98}, by Theorem~\ref{Theor:FromACGToNTA}, Theorem~\ref{theo:EmptinessACG} (for the case $B=\emptyset$) directly follows.

\newcounter{theo-FromACGToNTA}
\setcounter{theo-FromACGToNTA}{\value{theorem}}

\begin{theorem}\label{Theor:FromACGToNTA} Given a finite $\CGS$ $\GS =\tpl{\States,s_0,\Lab,\Trans}$ over $\Prop$ and an $\ACG$ $\Au=\tpl{Q,q_0,\delta,\alpha}$
over $2^{\Prop}$ with index $k$, one can construct in singly exponential time, a parity $\NTA$ $\Au_\GS$ over $2^{\Prop}\cup\{\bot\}$ and the set $\States$ of directions such that
$\Au_\GS$ accepts the set of $2^{\Prop}\cup \{\bot\}$-labeled complete $\States$-trees which are the $\bot$-completion encodings of the strategy trees of $\GS$
which are accepted by $\Au$. Moreover, $\Au_\GS$ has index $O(k|\Au|^{2})$ and $O(\States \cdot  (k|\Au|^{2})^{O(k|\Au|^{2})})$ states.
\end{theorem}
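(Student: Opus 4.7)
The plan is to construct $\Au_\GS$ in two stages. First, I would translate the parity $\ACG$ $\Au$, together with the structure of $\GS$, into an intermediate one-way parity $\ATA$ $\Au'$ over alphabet $2^{\Prop}\cup\{\bot\}$ and directions $\States$ that accepts exactly the $\bot$-completion encodings of strategy trees of $\GS$ accepted by $\Au$. Second, I would convert $\Au'$ into an equivalent parity $\NTA$ by adapting the alternating-to-nondeterministic translation of~\cite{Var98}. The crucial observation enabling the linear (rather than exponential) dependence on $|\States|$ is that the transition function $\delta$ of $\Au$ is direction-free: its atoms $(q',\Box,A)$ and $(q',\Diamond,A)$ mention only coalitions $A\subseteq\Agents$, and the dependence on the current $\GS$-state $s$ enters only through the concrete successor map $\Trans(s,\cdot)$ and the set $\Decisions_A(s)$.

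The first step is routine. I would take $\Au'$ with state set $Q\times\States$, initial state $(q_0,s_0)$, and priority $\alpha'(q,s)=\alpha(q)$. At a state $(q,s)$ reading a non-$\bot$ label $\sigma\in 2^{\Prop}$, the transition of $\Au'$ is obtained from $\delta(q,\sigma)$ by rewriting each universal atom $(q',\Box,A)$ into
\[
\bigvee_{d_A\in\Decisions_A(s)}\Bigl(\mathrm{av}(d_A,s)\;\wedge\!\!\!\bigwedge_{\substack{d_{\bar{A}}:\\ \Trans(s,d_A\cup d_{\bar{A}})\in\States}}\!\!\bigl(\,(q',s'),\,s'\,\bigr)\Bigr),
\]
where $s'=\Trans(s,d_A\cup d_{\bar{A}})$, the pair $((q',s'),s')$ denotes ``send $(q',s')$ to direction $s'$'', and $\mathrm{av}(d_A,s)$ is an auxiliary subformula asserting that at least one child in a direction consistent with $d_A$ is non-$\bot$ (so that $d_A$ is genuinely available in the strategy tree). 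Existential atoms are rewritten dually. At $\bot$-labeled nodes, universal obligations are accepted vacuously and existential ones fail, while $\Au'$ conjoins local checks that enforce a valid $\bot$-completion encoding (no $\bot$-children at system states; at least one non-$\bot$-child at environment states). Correctness of this step follows from the semantics of $\ACG$ together with the definition of the $\bot$-completion encoding, and $\Au'$ has $O(|\Au|\cdot|\States|)$ states, index $k$, and transition formulas of polynomial size.

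The second step is where the real work lies, and is the main obstacle. The generic conversion of a one-way parity $\ATA$ on $d$-direction trees into a parity $\NTA$ guesses, at each node, a ``strategy annotation'' that fixes how the alternating branches propagate to the children; when applied to $\Au'$ directly, this annotation is indexed by directions and incurs a blow-up that is exponential in $d=|\States|$. The key idea --- adapted from~\cite{Var98} --- is to exploit the direction-freeness of $\delta$ to collapse the direction-indexed annotation into a single direction-free object whose size is bounded by $(k|\Au|^2)^{O(k|\Au|^2)}$: because every atom $(q',\Box/\Diamond,A)$ names only a coalition $A$ and a target state $q'\in Q$, the set of obligations arising from the transition rewriting above depends only on $\Au$ and on the current label, so the current $\GS$-state enters only in how the NTA subsequently distributes this abstract annotation across the children via $\Trans(s,\cdot)$ and performs the local well-formedness checks. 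Accordingly, the NTA's state space is the product of these direction-free annotations with $\States$ (to track the current $\GS$-state, and hence determine $\Trans(s,\cdot)$ and $\Decisions_A(s)$). A standard Emerson--Jutla / Piterman-style priority analysis of the conversion yields index $O(k|\Au|^2)$, while the state count becomes $O(|\States|\cdot (k|\Au|^2)^{O(k|\Au|^2)})$. Composing the two steps gives the stated bounds, and $\Au_\GS$ is correct by construction.
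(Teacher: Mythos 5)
Your proposal is correct and follows essentially the same route as the paper: both first observe that the naive detour through a standard parity \ATA\ over $\States$ directions forces an \NTA\ state space exponential in $|\States|$, and both then exploit the direction-independence of the \ACG\ transition function to adapt the \ATA-to-\NTA\ conversion of Vardi so that the annotations range over a direction-free universe of size $(k|\Au|^{2})^{O(k|\Au|^{2})}$, with the current $\GS$-state tracked as a separate product component that only governs how obligations are distributed to children via $\Trans(s,\cdot)$ and $\Decisions_A(s)$. The only caveat is presentational: your second stage cannot be a black-box application of the generic conversion to the intermediate \ATA\ $\Au'$ (whose state set $Q\times\States$ would reintroduce the $|\States|$-exponential blow-up in the annotations); as you in fact indicate, the annotation must be built directly from the atoms of $\Au$, with $\Au'$ serving only as the correctness reference.
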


\section{\THREEXPTIME--hardness of \ATLStar\ module checking}\label{sec:LowerBound}

In this section, we establish the following result.

\begin{theorem}\label{theorem:lowerbound}
Module checking against \ATLStar\ is
    \THREEXPTIME--hard even for two-player turn-based open $\CGS$ of fixed size.
\end{theorem}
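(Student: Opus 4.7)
The plan is to reduce from the word problem for alternating Turing machines (ATMs) working in space $2^{2^{n}}$, which is complete for \THREEXPTIME. Given such an ATM $M$ and input $w$ with $n=|w|$, I would build a two-player turn-based open \CGS\ $\GS$ of fixed size together with an \ATLStar\ state formula $\varphi_{M,w}$ of size polynomial in $|M|+n$ such that $M$ accepts $w$ iff $\GS\models^{r}\varphi_{M,w}$. The module $\GS$ is a small gadget in which the two players ($\sys$ and $\env$) alternately pick one of a constant number of actions, so that a play of $\GS$ is essentially an arbitrary infinite word over a small alphabet; the full combinatorial content of $M$'s computation is then pushed into $\varphi_{M,w}$, which is the only part of the reduction depending on $M$ and $w$.

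The central difficulty is to talk, with a polynomial-size formula, about cell addresses of length $2^n$, since each configuration has $2^{2^n}$ cells. The key is the extra power that \ATLStar\ offers over \CTLStar: a strategy of the system is a function from finite play prefixes to actions and can therefore encode a Boolean function of $2^n$ inputs. Concretely, $\GS$ is designed so that plays decompose into successive blocks; inside each block there is a ``counter phase'' in which the environment writes an address $i$ bit-by-bit, followed by a ``value phase'' in which the system declares the content of the corresponding cell. A single application of $\Exists{\sys}$ thus binds an entire $2^{2^n}$-cell configuration by fixing the system's reactions throughout the block. Successive configurations of $M$'s computation tree are laid out along the spine, separated by distinguished markers, and the ATM's own alternation between universal and existential states is transferred to $\GS$: at a universal state the environment chooses the next configuration, at an existential state the system does. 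The outer universal quantification over environment prunings built into $\models^{r}$, layered together with \ATLStar's inner strategic quantifiers, supplies precisely the extra level of alternation responsible for the exponential gap between \CTLStar\ and \ATLStar\ module checking.

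The main obstacle will be the \emph{local consistency} check: for every address $i$ and every two consecutive configurations $C_k$ and $C_{k+1}$, one must verify that $C_{k+1}[i]$ is determined by the triple $C_k[i-1], C_k[i], C_k[i+1]$ through $M$'s transition relation. Universal quantification over $i$ is supplied for free by the module-checking semantics --- every way the environment writes the $2^n$ address bits is considered --- while the required agreement between ``the same'' address used on two consecutive configurations is enforced by a polynomial-size bit-by-bit equality subformula, and an analogous polynomial-size ``increment'' subformula handles the neighbours $i\pm 1$. One must further enforce that the system's strategy really represents a function of the address (any two plays with identical bit sequences must yield identical cell contents), which is expressed in the same style. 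Assembling these gadgets produces an \ATLStar\ state formula of size polynomial in $|M|+n$ whose reactive satisfaction in the fixed module is equivalent to the existence of an accepting computation tree of $M$ on $w$, establishing the claimed \THREEXPTIME\ lower bound.
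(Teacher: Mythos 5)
Your high-level plan matches the paper's: a reduction from \TWOEXPSPACE-bounded alternating Turing machines, a fixed-size turn-based open \CGS, and the use of $\Exists{\sys}$ so that a system strategy (a function of the history) can bind information about a configuration of doubly exponential length. However, the step you dismiss as routine --- ``the required agreement between `the same' address used on two consecutive configurations is enforced by a polynomial-size bit-by-bit equality subformula'' --- is precisely the technical heart of the proof, and as stated it does not go through. The addresses are $2^{n}$-bit strings, so a literal bit-by-bit equality (or increment) formula is a conjunction of $2^{n}$ clauses, i.e.\ exponential; and a path formula cannot address ``the $j$-th bit of the other address'' for $j$ up to $2^{n}$ with polynomially many nested $\Next$ operators. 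The standard repair, which the paper implements, is a two-level counter: each of the $2^{n}$ bits of a cell address is itself tagged with an $n$-bit index, each bit position spawns an extra ``marked companion'' branch carrying its index and value, and equality of two $2^{n}$-bit addresses is expressed as ``for every branch to a marked bit, the (unique) companion position reachable on that branch has the same $n$-bit index and the same value,'' which costs only $O(n)$ per comparison (this is the role of $\theta_{\textit{cor}}$ and the sub-block/check-tree machinery in the paper). Your proposal contains no inner $n$-bit counter and no branch-per-bit gadget, so the local-consistency check and the ``increment'' check have no polynomial-size implementation. Relatedly, comparing $C_{k+1}[i]$ against the three cells $C_k[i-1],C_k[i],C_k[i+1]$ requires correlating several distinct branches of the tree; the paper does this by having $\Exists{\sys}$ trap all outcomes of a guessed strategy inside a single ``block check-tree'' of the successor configuration whose marked branches expose all $2^{n}$ index/value pairs at once --- a mechanism your sketch does not supply.

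A smaller but real issue is the direction of the reduction. Module checking quantifies \emph{universally} over environment prunings, while acceptance of the ATM is the \emph{existence} of an accepting computation tree; the paper therefore arranges that $\mathcal{M}$ accepts $\alpha$ iff some strategy tree satisfies $\varphi$, i.e.\ iff $\GS\not\models^{r}\neg\varphi$, and concludes hardness via the complement (harmless, since \THREEXPTIME\ is closed under complementation). Your formulation ``$M$ accepts $w$ iff $\GS\models^{r}\varphi_{M,w}$,'' together with the claim that the universal quantification over prunings itself ``supplies the extra level of alternation,'' conflates the two roles: the prunings are used existentially to select the computation tree (including the TM's existential choices and the cell contents), whereas the extra exponential over \CTLStar\ comes from the inner $\Exists{\sys}$ quantifier selecting a proper subset of branches inside an already-pruned tree. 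You would need to either flip the reduction to the complement or justify why a universally-quantified formulation still captures acceptance.
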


Theorem~\ref{theorem:lowerbound} is proved by a polynomial-time  reduction from the word problem for \TWOEXPSPACE--bounded alternating
 Turing Machines. %Without loss of generality, we consider
%a model of alternation with a binary branching degree.
 Formally,
an alternating  Turing Machine  (TM, for short)  is a tuple
$\mathcal{M}=\tpl{\Sigma,Q,Q_{\forall},Q_{\exists},q_0,\delta,F}$,
where $\Sigma$ is the input alphabet, which contains the blank
symbol $\#$, $Q$ is the finite set of states which is partitioned
into $Q=Q_{\forall} \cup  Q_{\exists}$, $Q_{\exists}$ (resp.,
$Q_{\forall}$) is the set of existential (resp., universal)
states, $q_0$ is the initial state, $F\subseteq Q$ is the set  of
accepting states, and the transition function $\delta$ is a
mapping $\delta: Q\times \Sigma\rightarrow (Q \times\Sigma \times
\{L,R\})^2$. Configurations of $\mathcal{M}$ are words in
$\Sigma^*\cdot(Q\times \Sigma)\cdot\Sigma^*$. A configuration
$C= \eta\cdot(q,\sigma)\cdot\eta'$ denotes that the tape content is
$\eta\cdot\sigma\cdot\eta'$, the current state (resp., input symbol) is $q$ (resp., $\sigma$), and the reading head
is at position $|\eta|+1$. From configuration $C$,  the machine $\mathcal{M}$ %is in  state $q$ and
%reads an input $\sigma\in\Sigma$ in the current tape cell, then it
nondeterministically chooses a triple $(q',\sigma',dir)$ in
$\delta(q,\sigma)=
\tpl{(q_l,\sigma_l,dir_l),(q_r,\sigma_r,dir_r)}$, and then moves
to state $q'$, writes $\sigma'$ in the current tape cell, and its
reading head moves one cell to the left or to the right, according
to $dir$. We denote by $succ_l(C)$ and
$succ_r(C)$ the successors of $C$ obtained by choosing respectively
the left and the right triple in
$\tpl{(q_l,\sigma_l,dir_l),(q_r,\sigma_r,dir_r)}$. The
configuration $C$ is accepting  (resp., universal, resp., existential ) if the associated state $q$
is in $F$ (resp., in $Q_\forall$, resp., in $Q_\exists$). Given an input $\alpha\in \Sigma^*$, a (finite) computation tree
of $\mathcal{M}$ over $\alpha$ is a finite tree in which each node is labeled
by a configuration. The root of the tree corresponds to the
initial configuration associated with $\alpha$. %\footnote{We assume that
%initially $\mathcal{M}$'s reading head is scanning the first cell
%of the tape}
 An \emph{internal} node that is labeled by a universal configuration
$C$  has two
children, corresponding to $succ_l(C)$ and $succ_r(C)$, while an internal
node labeled by an existential configuration $C$  has a single child,
corresponding to either $succ_l(C)$ or $succ_r(C)$. The tree is
accepting iff every leaf is labeled by an accepting configuration. An input $\alpha\in\Sigma^*$ is \emph{accepted} by
$\mathcal{M}$ iff there is an accepting computation tree
of $\mathcal{M}$ over $\alpha$.
If $\mathcal{M}$ is \TWOEXPSPACE--bounded, then there
 is a constant
  $k\geq 1$ such that for each  $\alpha\in \Sigma^*$, the
 space needed by $\mathcal{M}$ on input $\alpha$ is bounded by
  $2^{2^{|\alpha|^k}}$. It is well-known~\cite{CKS81} that \THREEXPTIME\ coincides with
 the class of all languages accepted by  \TWOEXPSPACE--bounded
  alternating Turing Machines (TM). Moreover, the considered word problem remains \THREEXPTIME-complete even for
  \TWOEXPSPACE--bounded TM of fixed size.

  Fix
   a  \TWOEXPSPACE--bounded TM
$\mathcal{M}=\tpl{\Sigma,Q,Q_{\forall},Q_{\exists},q_0,\delta,F}$ and an input $\alpha\in \Sigma^*$. Let $n=|\alpha|$. W.l.o.g. we  assume that the constant $k$ is $1$.
 Hence, any reachable configuration of $\mathcal{M}$
over $\alpha$ can be seen as a word in $\Sigma^*\cdot(Q\times
\Sigma)\cdot\Sigma^*$ of length exactly $2^{2^{n}}$. In particular,
 the initial configuration is  $(q_0,\alpha(0)) \alpha(1) \ldots
\alpha(n-1)\cdot (\#)^{2^n-n}$. Note that
for a  TM configuration $C=u_1 u_2\ldots u_{2^{2^{n}}}$ and for all $i\in [1, 2^{2^{n}}]$ and $\dire\in \{l,r\}$, the
value $u'_i$ of the $i$-th cell of $succ_{\dire}(C)$ is completely determined by the values
$u_{i-1}$, $u_{i}$ and $u_{i+1}$ (taking $u_{i+1}$ for $i=2^n$ and
 $u_{i-1}$ for $i=1$ to be some special symbol, say $\bot$). We denote by
 $\Succ_{\dire}(u_{i-1},u_i,u_{i+1})$  our
expectation for $u'_i$ (this function  can be trivially obtained from the transition function of
$\mathcal{M}$).
According to the above observation, we use the set $\Lambda$ of triples of the form $(u_p,u,u_s)$ where $u\in  \Sigma \cup (Q\times \Sigma)$, and $u_p,u_s\in \Sigma \cup (Q\times \Sigma)\cup \{\bot\}$. %Intuitively, $u$ represents the content of   a TM cell while $u_p$  (resp., $u_s$) represents the content of the previous (resp., next)
%cell along the given TM configuration if such a cell exists, and $u_p=\bot$ (resp., $u_s=\bot$) otherwise.

In the following, we prove the following result from which Theorem~\ref{theorem:lowerbound} directly follows.

\begin{theorem}\label{theorem:lowerboundReduction} One can construct, in time polynomial in $n$ and the size of $\mathcal{M}$, a finite turn-based open $\CGS$
$\GS$ and an $\ATLStar$ state formula $\varphi$ over the set of agents $\Agents =\{\sys,\env\}$ such that
$\mathcal{M}$ accepts $\alpha$ \emph{iff}
there is a strategy tree in $ \exec(\GS)$ that satisfies $\varphi$  \emph{iff}
$\GS\not\models_r \neg \varphi$. Moreover, the size of $\GS$ depends only on the size of $\mathcal{M}$.
\end{theorem}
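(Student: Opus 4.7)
The plan is to encode, for the given \TWOEXPSPACE--bounded alternating TM $\mathcal{M}$ and input $\alpha$ of length $n$, a candidate accepting computation tree of $\mathcal{M}$ on $\alpha$ as a strategy tree in $\exec(\GS)$ of a turn-based open \CGS\ $\GS$ whose size depends only on $|\mathcal{M}|$, and to build an \ATLStar\ state formula $\varphi$ over $\Agents=\{\sys,\env\}$ of size polynomial in $n$ and $|\mathcal{M}|$ whose models, among the strategy trees of $\GS$, are exactly the faithful encodings of accepting computation trees of $\mathcal{M}$ on $\alpha$. The final equivalence of the statement then follows immediately from the definition of $\models_{r}$, since $\GS\not\models_{r}\neg\varphi$ iff some strategy tree of $\GS$ satisfies $\varphi$. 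The encoding I fix is hierarchical: every maximal path of a strategy tree decomposes into \emph{configuration blocks} separated by dedicated markers; each block consists of $2^{2^{n}}$ \emph{cell-items}; each cell-item is a $2^{n}$-bit \emph{address} followed by a triple from $\Lambda$; and each address bit is preceded by an $n$-bit \emph{sub-address} tag giving its position within the address. The \CGS\ $\GS$ is a small fixed two-player turn-based game in which the environment freely emits cell symbols, address bits, and sub-address bits, while the system emits the separator transitions demarcating the three levels. Meaningful environment nondeterminism occurs only at dedicated states at the end of each configuration block: pruning in $\exec(\GS)$ there realizes the branching of the computation tree of $\mathcal{M}$ (two successors at universal configurations, modeling $succ_{l}$ and $succ_{r}$, and at least one at existential ones).

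The formula is $\varphi=\varphi_{\mathit{init}}\wedge\varphi_{\mathit{fmt}}\wedge\varphi_{\mathit{sub}}\wedge\varphi_{\mathit{addr}}\wedge\varphi_{\mathit{succ}}\wedge\varphi_{\mathit{br}}\wedge\varphi_{\mathit{acc}}$. Here $\varphi_{\mathit{fmt}}$ enforces the three-level layout on every path by an \LTL\ property guarded by $\Forall{\emptyset}$; $\varphi_{\mathit{init}}$ hard-codes the input by fixing the first $n$ cell-items of the initial block to $(q_{0},\alpha(0)),\alpha(1),\ldots,\alpha(n-1)$ followed by blanks (of size linear in $n$); $\varphi_{\mathit{sub}}$ forces the $n$-bit sub-addresses to enumerate $0,\ldots,2^{n}-1$ in order within each address; $\varphi_{\mathit{br}}$ imposes the required branching pattern at block-ends; and $\varphi_{\mathit{acc}}$ requires every maximal path to end in an accepting configuration. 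The non-trivial conjuncts are $\varphi_{\mathit{addr}}$, demanding that consecutive addresses within a block increment by $1$ modulo $2^{2^{n}}$, and $\varphi_{\mathit{succ}}$, demanding that the symbol at address $j$ of block $C_{t+1}$ equals $\Succ_{\dire}$ applied to the symbols at addresses $j-1,j,j+1$ of $C_{t}$.

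The main obstacle is expressing these $2^{n}$-bit address comparisons in polynomial size. The key device is nested strategic quantification over $\{\sys,\env\}$: to compare two addresses of length $2^{n}$ appearing along a path, I write a formula of the shape $\Forall{\sys}\,\Exists{\env}\bigl(\cdots\Forall{\env}\bigl(\cdots\bigr)\bigr)$ in which the outer quantifier, exploiting the sub-address tags furnished by $\varphi_{\mathit{sub}}$, commits to a specific bit-position inside the first address; the inner quantifier is then forced to witness the matching bit-position inside the other address; and a purely propositional inner check compares the two selected bits. Because strategies in \ATLStar\ are memoryful and react to the history along every path of the strategy tree, a single such quantifier can uniformly select a position inside a doubly-exponentially long sequence while the syntactic formula grows only polynomially in $n$. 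Applied twice, this device yields polynomial-size implementations of $\varphi_{\mathit{addr}}$ and $\varphi_{\mathit{succ}}$: one quantifier layer aligns a common cell-address across two consecutive configuration blocks, and a second compares the $2^{n}$-bit addresses bit by bit.

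For correctness, if $\mathcal{M}$ accepts $\alpha$, I build a strategy tree of $\GS$ by laying out an accepting computation tree of $\mathcal{M}$ according to the hierarchical encoding and pruning environment transitions to match the TM branching; all conjuncts of $\varphi$ then hold on this tree. Conversely, any strategy tree of $\GS$ satisfying $\varphi$ must, by $\varphi_{\mathit{fmt}}\wedge\varphi_{\mathit{sub}}\wedge\varphi_{\mathit{addr}}$, decompose into well-formed configuration blocks along every path; by $\varphi_{\mathit{succ}}$ respect the TM transition cell by cell; by $\varphi_{\mathit{br}}$ implement universal and existential branching faithfully; and by $\varphi_{\mathit{init}}\wedge\varphi_{\mathit{acc}}$ start in the initial configuration and accept at every leaf, giving an accepting computation tree of $\mathcal{M}$ on $\alpha$. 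Since $|\GS|$ depends only on $|\mathcal{M}|$ and $|\varphi|$ is polynomial in $n+|\mathcal{M}|$, the theorem follows.
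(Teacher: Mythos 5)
Your overall architecture is the right one and matches the paper's: reduce from the word problem for \TWOEXPSPACE-bounded alternating TMs, encode configurations hierarchically with an $n$-bit counter indexing the bits of a $2^{n}$-bit cell address, keep $\GS$ of size independent of $n$ with the environment's pruning determining the content and the system controlling only the structural branch points, and use a strategic quantifier over a \emph{proper} coalition to compare cell addresses that lie doubly-exponentially far apart. The equivalence with $\GS\not\models_r\neg\varphi$ is indeed immediate. However, there is a genuine gap in the one place where all the difficulty sits, namely the claim that a formula of shape $\Forall{\sys}\Exists{\env}(\cdots\Forall{\env}(\cdots))$ can ``commit to a bit-position inside the first address'' and have an inner quantifier ``witness the matching position in the other address,'' finished off by ``a purely propositional inner check.'' A strategy in \ATLStar\ can only make selections at nodes of the strategy tree that actually have several children; in your encoding the addresses are laid out linearly along a path (the environment's pruning has already fixed the bits), so within an address there is nothing for any coalition's strategy to choose, and no quantifier alternation can isolate a position. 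Moreover, even granting such branch points, the two ``selected bits'' live either on different branches or at distance $2^{2^{n}}$ along one path, so no propositional (or short \LTL) check can relate them without a further mechanism.

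The paper supplies exactly the missing structure: every sub-block spawns a marked \emph{companion} branch (at a system-controlled $\begTwo$-node) carrying its $n$-bit index and content bit, and every block spawns a whole \emph{block check-tree} (at a system-controlled $\begOne$-node). A single quantifier $\Exists{\sys}$ then guesses a strategy whose outcomes are all trapped in \emph{one} check-tree of the successor configuration and include \emph{all} of its marked branches; each such outcome is a single path that traverses the entire current block $bl$ and ends in exactly one marked sub-block, so an \LTL\ formula of size polynomial in $n$ (the paper's $\theta_{\textit{cor}}$, which compares the $n$-bit indices via $\Next^{i+1}b$ against $\Eventually(\markTwo\wedge\Next^{i+1}b)$) can compare that marked sub-block with the sub-block of $bl$ bearing the same index; universal quantification over the strategy's outcomes then performs all $2^{n}$ bit comparisons at once. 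You need to add these companion branches and check sub-trees to your encoding (and ensure the system controls precisely those branch points, so they are never pruned away in a strategy tree), and replace the nested-alternation device and the ``propositional'' comparison with this single-quantifier, single-path comparison; the same companion branches also let plain \CTLStar\ path quantifiers verify your $\varphi_{\mathit{addr}}$ (the $2^{n}$-bit increment), which otherwise also exceeds polynomial \LTL.
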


In order to prove Theorem~\ref{theorem:lowerboundReduction}, we first define a suitable encoding of the accepting computation trees of
$\mathcal{M}$ over $\alpha$.\vspace{0.2cm}

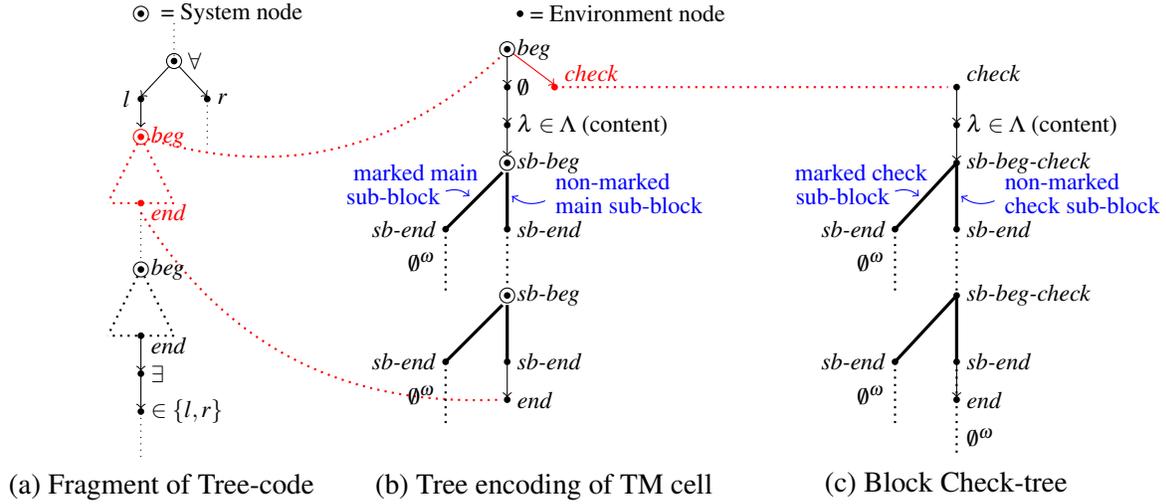
\begin{figure}[tb]%[htp]
  \centering
  \caption{Encoding of computation trees of $\mathcal{M}$}\label{FigureLowerBound}
  \begin{minipage}{.3\textwidth}
\begin{tikzpicture}[scale=0.63]
\draw[draw=none,use as bounding box](-2.5,2) rectangle (3.3,-8.5);

\coordinate [label=right:{\footnotesize  \, =  System node}] (SystemNode) at (0,1.0);
\node[shape=circle,draw=black,inner sep=2pt,fill=white](SN) at (0,1.0) {};
\fill [black] (SystemNode) circle (2pt);

\coordinate [label=right:{\footnotesize  =   Environment node}] (EnvironmentNode) at (8,1.0);
\fill [black] (EnvironmentNode) circle (2pt);

\coordinate [label=right:{\footnotesize \,$\forall$}] (UnivNode) at (0.7,0);
\coordinate [label=right:{\footnotesize $r$}] (RightSucc) at (1.4,-0.8);
\path[->, thin,black] (UnivNode) edge  (0,-0.75);
\path[->, thin,black] (UnivNode) edge  (1.4,-0.75);

\node[shape=circle,draw=black,inner sep=2pt,fill=white](A) at (0.7,0) {};
\draw[thin, dotted, black] (RightSucc) edge (1.4,-1.8);
\draw[thin, dotted, black] (0.7,0.8) edge (UnivNode);

\coordinate [label=left:{\footnotesize $l$}] (LeftSucc) at (0,-0.8);
\fill [black] (LeftSucc) circle (2pt);
\fill [black] (RightSucc) circle (2pt);
\fill [black] (UnivNode) circle (2pt);
\path[->, thin,black] ( LeftSucc) edge  (0,-1.4);

\coordinate [label=right:\textcolor{red}{\footnotesize $\begOne$}] (Beg) at (0,-1.6);
\coordinate [label=right:\textcolor{red}{\footnotesize $\EndOne$}] (End) at (0,-3.2);

\node[shape=circle,draw=red,inner sep=2pt,fill=white](B) at (0,-1.6) {};

\draw(Beg.east) edge[thick,dotted, bend right,red] (7.6,0.1);
\draw(End.east) edge[thick,dotted, bend right,red] (7.6,-7.15);

\fill [red] (Beg) circle (2pt);
\fill [red] (0,-3.0) circle (2pt);
\draw[thick, dotted, red] (Beg) edge (0.7,-3.0);
\draw[thick, dotted, red] (Beg) edge (-0.7,-3.0);
\draw[thick, dotted, red] (0.7,-3.0) edge (-0.7,-3.0);

\coordinate [label=right:{\footnotesize $\begOne$}] (BegLast) at (0,-4.4);
\coordinate [label=right:{\footnotesize $\EndOne$}] (EndLast) at (0,-6.0);

\node[shape=circle,draw=black,inner sep=2pt,fill=white](B) at (0,-4.4) {};

\draw[thin, dotted, black] (End) edge (0,-4.2);

\fill [black] (BegLast) circle (2pt);
\fill [black] (0,-5.8) circle (2pt);
\draw[thick, dotted, black] (BegLast) edge (0.7,-5.8);
\draw[thick, dotted, black] (BegLast) edge (-0.7,-5.8);
\draw[thick, dotted, black] (0.7,-5.8) edge (-0.7,-5.8);

\coordinate [label=right:{\footnotesize $\exists$}] (ExistsNode) at (0,-6.6);
\coordinate [label=right:{\footnotesize $\in\{l,r\}$}] (TypeNode) at (0,-7.4);

\path[->, thin,black] (0,-5.8) edge  (0,-6.54);
\path[->, thin,black] (ExistsNode) edge  (0,-7.34);
\fill [black] (ExistsNode) circle (2pt);
\fill [black] (TypeNode) circle (2pt);
\draw[thin, dotted, black] (TypeNode) edge (0,-8.4);

\coordinate [label=right:{(a) Fragment of Tree-code}] (Caption) at (-3,-9);
\end{tikzpicture}

  \end{minipage}\quad
\begin{minipage}{.31\textwidth}
\begin{tikzpicture}[scale=0.63]
\draw[draw=none,use as bounding box](-2,1.5) rectangle (3.3,-8.5);

\coordinate [label=right:{\footnotesize $\begOne$}] (BegCheck) at (0,0);
\coordinate [label=right:{\footnotesize $\emptyset$}] (Main) at (0,-0.8);
\draw[thick,dotted, red] (1,-0.8) edge (9.3,-0.8);
\coordinate [label=right:\textcolor{red}{\footnotesize $\checkOne$}] (Check) at (1,-0.5);
%\draw[red] (A0) -- (B0);
\path[->, thin,black] (BegCheck) edge  (0,-0.75);
\path[->, thin,red] (BegCheck) edge  (1,-0.75);

\node[shape=circle,draw=black,inner sep=2pt,fill=white](A) at (0,0) {};
\fill [black] (BegCheck) circle (2pt);

\fill [black] (Main) circle (2pt);
\fill [black,red] (1,-0.8) circle (2pt);

\path[->, thin,black] (Main) edge  (0,-1.55);

\coordinate [label=right:{\footnotesize $\lambda\in\Lambda$ (content)}] (ContentCheck) at (0,-1.6);
\coordinate [label=right:{\footnotesize $\begTwo$}] (SBegCheck) at (0,-2.4);
\coordinate [label=right:{\footnotesize $\EndTwo$}] (SEndCheck) at (0,-3.8);
\coordinate [label=left:{\footnotesize $\EndTwo$}] (SEndCheckMark) at (-1.3,-3.8);

\path[->, thin,black] (ContentCheck) edge  (0,-2.25);
\fill [black] (ContentCheck) circle (2pt);
\node[shape=circle,draw=black,inner sep=2pt,fill=white](B) at (0,-2.4) {};
\fill [black] (SBegCheck) circle (2pt);

\fill [black] (SEndCheck) circle (2pt);
\fill [black] (SEndCheckMark) circle (2pt);
\draw[very thick, black] (0,-2.6) edge (SEndCheck);
\draw[very thick, black] (-0.1,-2.6) edge (SEndCheckMark);

\coordinate  (CommRInit) at (0.1,-3.1);
\coordinate  (CommR) at (0.8,-3.2);
\coordinate [label=right:\textcolor{blue}{\footnotesize non-marked}] (LineAR) at (0.8,-2.8);
\coordinate [label=right:\textcolor{blue}{\footnotesize main sub-block}] (LineBR) at (0.8,-3.3);
\draw(CommR.west) edge[->, bend left,blue] (CommRInit);

\coordinate  (CommLInit) at (-0.8,-3.1);
\coordinate  (CommL) at (-1.3,-3.0);
\coordinate [label=left:\textcolor{blue}{\footnotesize marked main}] (LineAL) at (-0.4,-2.6);
\coordinate [label=left:\textcolor{blue}{\footnotesize sub-block}] (LineBL) at (-1.2,-3.1);
\draw(CommL.east) edge[->, bend left,blue] (CommLInit);

\coordinate [label=right:{\footnotesize $\begTwo$}] (SBegCheckLast) at (0,-5.2);
\coordinate [label=right:{\footnotesize $\EndTwo$}] (SEndCheckLast) at (0,-6.6);
\coordinate [label=left:{\footnotesize $\EndTwo$}] (SEndCheckMarkLast) at (-1.3,-6.6);

\coordinate [label=right:{\footnotesize $\EndOne$}] (EndCheck) at (0,-7.4);

\draw[thick,dotted, black] (SEndCheck) edge (SBegCheckLast);
\path[->, thin,black] (SEndCheckLast) edge  (0,-7.35);
\node[shape=circle,draw=black,inner sep=2pt,fill=white](B) at (0,-5.2) {};
\fill [black] (SBegCheckLast) circle (2pt);

\fill [black] (SEndCheckLast) circle (2pt);
\fill [black] (SEndCheckMarkLast) circle (2pt);
\fill [black] (EndCheck) circle (2pt);
\draw[very thick, black] (0,-5.4) edge (SEndCheckLast);
\draw[very thick, black] (-0.1,-5.4) edge (SEndCheckMarkLast);

%\draw[thick,dotted, black] (SEndCheckLast) edge (0,-8.6);
\draw[thick,dotted, black] (SEndCheckMark) edge (-1.3,-5.2);
\draw[thick,dotted, black] (SEndCheckMarkLast) edge (-1.3,-8.0);

%\coordinate [label=right:{\footnotesize $\emptyset^{\omega}$}] (CheckTreeEmpty) at (0,-8.2);
\coordinate [label=left:{\footnotesize $\emptyset^{\omega}$}] (SMarkCheckEmpty) at (-1.3,-4.5);
\coordinate [label=left:{\footnotesize $\emptyset^{\omega}$}] (SMarkCheckEmptyLast) at (-1.3,-7.3);

\coordinate [label=right:{(b) Tree encoding of TM cell}] (Caption) at (-3,-9.24);

\end{tikzpicture}
 \end{minipage}\quad
  \begin{minipage}{.31\textwidth}
\begin{tikzpicture}[scale=0.63]
\draw[draw=none,use as bounding box](-3,1.5) rectangle (3.3,-8.5);

%\coordinate [label=right:{\footnotesize $\begOne$}] (BegCheck) at (0,0);
\coordinate [label=right:{\footnotesize $\checkOne$}] (Check) at (0,-0.5);

\fill [black] (0,-0.8) circle (2pt);

\path[->, thin,black] (0,-0.8) edge  (0,-1.55);

\coordinate [label=right:{\footnotesize $\lambda\in\Lambda$ (content)}] (ContentCheck) at (0,-1.6);
\coordinate [label=right:{\footnotesize $\checkTwo$}] (SBegCheck) at (0,-2.4);
\coordinate [label=right:{\footnotesize $\EndTwo$}] (SEndCheck) at (0,-3.8);
\coordinate [label=left:{\footnotesize $\EndTwo$}] (SEndCheckMark) at (-1.3,-3.8);

\path[->, thin,black] (ContentCheck) edge  (0,-2.35);
\fill [black] (ContentCheck) circle (2pt);
\fill [black] (SBegCheck) circle (2pt);

\fill [black] (SEndCheck) circle (2pt);
\fill [black] (SEndCheckMark) circle (2pt);
\draw[very thick, black] (SBegCheck) edge (SEndCheck);
\draw[very thick, black] (SBegCheck) edge (SEndCheckMark);

\coordinate  (CommRInit) at (0.1,-3.1);
\coordinate  (CommR) at (0.8,-3.2);
\coordinate [label=right:\textcolor{blue}{\footnotesize non-marked}] (LineAR) at (0.8,-2.8);
\coordinate [label=right:\textcolor{blue}{\footnotesize check sub-block}] (LineBR) at (0.8,-3.3);
\draw(CommR.west) edge[->, bend left,blue] (CommRInit);

\coordinate  (CommLInit) at (-0.8,-3.1);
\coordinate  (CommL) at (-1.3,-3.0);
\coordinate [label=left:\textcolor{blue}{\footnotesize marked check}] (LineAL) at (-0.4,-2.6);
\coordinate [label=left:\textcolor{blue}{\footnotesize sub-block}] (LineBL) at (-1.2,-3.1);
\draw(CommL.east) edge[->, bend left,blue] (CommLInit);

\coordinate [label=right:{\footnotesize $\checkTwo$}] (SBegCheckLast) at (0,-5.2);
\coordinate [label=right:{\footnotesize $\EndTwo$}] (SEndCheckLast) at (0,-6.6);
\coordinate [label=left:{\footnotesize $\EndTwo$}] (SEndCheckMarkLast) at (-1.3,-6.6);

\coordinate [label=right:{\footnotesize $\EndOne$}] (EndCheck) at (0,-7.4);

\draw[thick,dotted, black] (SEndCheck) edge (SBegCheckLast);
\path[->, thin,black] (SEndCheckLast) edge  (0,-7.35);
\fill [black] (SBegCheckLast) circle (2pt);
\fill [black] (SEndCheckLast) circle (2pt);
\fill [black] (SEndCheckMarkLast) circle (2pt);
\fill [black] (EndCheck) circle (2pt);
\draw[very thick, black] (SBegCheckLast) edge (SEndCheckLast);
\draw[very thick, black] (SBegCheckLast) edge (SEndCheckMarkLast);

\draw[thick,dotted, black] (SEndCheckLast) edge (0,-8.6);
\draw[thick,dotted, black] (SEndCheckMark) edge (-1.3,-5.2);
\draw[thick,dotted, black] (SEndCheckMarkLast) edge (-1.3,-8.0);

\coordinate [label=right:{\footnotesize $\emptyset^{\omega}$}] (CheckTreeEmpty) at (0,-8.2);
\coordinate [label=left:{\footnotesize $\emptyset^{\omega}$}] (SMarkCheckEmpty) at (-1.3,-4.5);
\coordinate [label=left:{\footnotesize $\emptyset^{\omega}$}] (SMarkCheckEmptyLast) at (-1.3,-7.3);
\coordinate [label=right:{(c) Block Check-tree}] (Caption) at (-3,-9.19);

\end{tikzpicture}
  \end{minipage}
%  \vspace{-0.4cm}
\end{figure}

\noindent \textbf{Encoding of   computation trees of
$\mathcal{M}$ over $\alpha$.}
In the encoding of a TM configuration, as usual, for each TM cell, we record both  the content of the cell and  the location (cell number)
of the cell on the TM tape. We also record  the contents of the previous and next cell (if any).  Since the cell number is in the range $[0,2^{2^n}-1]$,
it can be encoded  by a $2^n$-bit counter. Moreover, we  need
an $n$-bit counter in order to keep track of the \emph{position}
(index) of each bit of our $2^n$-bit counter.
Formally, we exploit the following set $\Prop$ of atomic propositions
\[
\Prop:= \Lambda\cup \{0,1,
\forall,\exists,l,r,f,  \begOne,\EndOne,\checkOne,\begTwo,\EndTwo, \checkTwo, \markTwo\}
\]
 where  $0$ and $1$ are used to encode the cell numbers, and the meaning of the letters   in
    $\{\forall,\exists,l,r,f,  \begOne,\EndOne,$ $\checkOne,\begTwo,\EndTwo, \checkTwo, \markTwo\}$
will be explained later.

The value $b\in\{0,1\}$ and the index   $i\in [0,2^{n}-1]$ of a bit in the $2^n$-bit counter is encoded by a  \emph{TM sub-block} $sb$,  which is a word  of the form $sb =  \textit{Type} \cdot\Tag \cdot\{b\}\cdot\{b_1\}\cdot\ldots\cdot \{b_n\}\cdot\{\EndTwo\}$,
where $\textit{Type}\in \{\{\begTwo\},\{\checkTwo\}\}$,    $\Tag\in\{\emptyset,\{\markTwo\}\}$, and $b_1\cdot\ldots\cdot b_n\in\{0,1\}^{n}$ is the binary code of the index $i$.
We say that $b$ (resp., $i$) is the \emph{content} (resp., \emph{number}) of $sb$.
 Moreover, $sb$ is  a \emph{main} (resp., \emph{check}) sub-block if $\textit{beg} =\{\begTwo\}$ (resp., $\textit{beg} =\{\checkTwo\}$), and
  $sb$ is \emph{marked}  (resp., $\emph{non-marked}$) if
$\Tag = \{\markTwo\}$  (resp., $\Tag =\emptyset$).

A TM cell is in turn encoded by a \emph{TM block},  which   is a word $bl$ of the form
$bl=\{\begOne\}\cdot \Tag \cdot \lambda \cdot   sb_1 \cdot \ldots \cdot sb_k \cdot\{\EndOne\}$
for some $k\geq 1$, where $\Tag\in\{\emptyset,\{\checkOne\}\}$,  $\lambda\in \Lambda$ is the \emph{content} of $bl$, and $sb_1,\ldots,sb_k$ are non-marked \emph{main} sub-blocks if $\Tag =\emptyset$ (in this case, $bl$ is a \emph{main} block),
and $sb_1,\ldots,sb_k$ are non-marked \emph{check} sub-blocks otherwise (in this case, $bl$ is a \emph{check} block).
  If $k=2^n$ and  for each $i\in [1,2^n]$, the number of $sb_i$ is $i-1$,  we say that $bl$ is  \emph{well-formed}. In this case, the \emph{number}   of $bl$
is the integer in $[0,2^{2^n}-1]$ whose binary code is given by
$b_1\ldots b_{2^{n}}$, where for all $i\in [1,2^{n}]$, $b_i$ is the content of $sb_i$.   Note that if the content $\lambda$ of $bl$ is of the form $(u_p,u,u_s)$, then $u$ represents the value  of the encoded TM cell, while $u_p$
(resp., $u_s$) represents the value of the previous (resp., next) cell in the TM configuration.

 TM configurations $C=u_1 u_2\ldots u_k$ (note that here we do not require that $k=2^{2^n}$) are then encoded by  words $w_C$  of the form
$w_C = \Tag_1 \cdot bl_1 \cdot \ldots \cdot bl_k \cdot \Tag_2$,
where $\Tag_1\in\{\{l\},\{r\}\}$,
 for each $i\in [1,k]$, $bl_i$ is a \emph{non-marked main TM block} whose content is $(u_{i-1},u_i,u_{i+1})$ (where $u_0=\bot$ and $u_{k+1}=\bot$), $\Tag_2=\{f\}$ if $C$ is accepting,  $\Tag_2=\{\exists\}$ if $C$ is  non-accepting and existential, and $\Tag_2=\forall$ otherwise. The symbols $l$ and $r$ are used to mark a left and a right TM successor, respectively. We also use the symbol $l$ to mark the initial configuration.
If $k=2^{2^n}$ and for each $i\in [1,k]$, $bl_i$ is a well-formed  block  having number
$i-1$, then we say that $w_C$ is a \emph{well-formed code} of $C$.
%Let $\nu=C_1,\ldots, C_p$ be a finite   sequence  of TM configurations (of length $2^{2^{n}}$) and for all $i\in [1,n]$,  let $w_{C_i}$ be a well-formed code of $C_i$.
A sequence $w_{C_1}\cdot \ldots \cdot w_{C_p}$ of well-formed TM configuration codes   is \emph{faithful to the evolution} of $\mathcal{M}$ if for each $1\leq i<p $, either $w_{C_{i+1}}$ is marked by symbol $l$ and $C_{i+1}=succ_l(C_i)$, or $w_{C_{i+1}}$ is marked by symbol $r$ and $C_{i+1}=succ_r(C_i)$.

In the encoding of the computation trees of
$\mathcal{M}$, \emph{marked} sub-blocks are used as additional branches for ensuring by a \CTLStar\ formula that the TM blocks are well-formed (i.e., the $n$-counter is properly updated) and the TM configurations codes are well-formed as well (i.e., the $2^{n}$-counter is properly updated). Moreover,
 suitable  tree encodings of   \emph{check} TM blocks, called \emph{block check-trees} (see Figure~\ref{FigureLowerBound}(c)) are
  exploited as additional subtrees   for ensuring by an  \ATLStar\ formula  that the encoding is faithful to the evolution of $\mathcal{M}$.
 Intuitively, a block check-tree corresponds to a check TM block $bl$ extended with additional branches which represent marked copies of the  sub-blocks of $bl$. %These additional branches are used to ensure by a \CTLStar\ formula that the check block is well-formed.

\begin{definition}[Block Check-trees] A \emph{block check-tree} is a $2^{\Prop}$-labeled tree $\tpl{T,\Lab}$ such that there is an infinite path $\pi$ from the root so that
$\Lab(\pi)$ is of the form $bl \cdot \emptyset^{\omega}$, where $bl$ is a check block ($bl$ is the block encoded by $\tpl{T,\Lab}$), and the following holds:
\begin{compactitem}
  \item  each node $x$ of $\pi$  labeled by $\{\checkTwo\}$ (the first symbol of a sub-block of $bl$) has two children, and for the child $y$ of $x$ which is not visited by $\pi$, there is a unique infinite path $\pi'$   from $x$ and visiting $y$. Moreover, $\Lab(\pi')$ is of the form $sb \cdot \emptyset^{\omega}$, where $sb$ is a marked check sub-block ($sb$ is the \emph{companion}  of the main sub-block of $\pi$ associated with node $x$);
   \item  each node of $\pi$ which is not labeled by $\{\checkTwo\}$  has exactly one child.
\end{compactitem}\vspace{0.1cm}
 $\tpl{T,\Lab}$ is  \emph{well-formed} if, additionally, $\Lab(\pi)$ encodes a well-formed check block and for each sub-block $sb$ along $\pi$, the companion $sb'$ of $sb$ has the same content and number as $sb$.
\end{definition}

We now define an encoding of the computation trees of $\mathcal{M}$ (see Figure~\ref{FigureLowerBound}), where, intuitively, the computations paths (main paths) are extended with additional branches (marked main sub-blocks) and additional subtrees (block check-trees).

\begin{definition}[Tree-Codes] A \emph{tree-code} is a  $2^{AP}$-labeled tree $\tpl{T,\Lab}$ such that there is a set $\Pi$ of infinite paths from the \emph{root},
 called \emph{main paths}, so that for each  $\pi\in \Pi$, $\Lab(\pi) =w_\pi\cdot \emptyset^{\omega}$ where
$w_\pi$ is a sequence of codes of
  TM configurations $C_1,\ldots,C_p$, $C_1$ has the form $(q_0,\alpha(0))\alpha(1)\ldots \alpha(n-1)\cdot (\#)^{k}$ for some $k\geq 0$,  $C_p$ is accepting, $C_i$ is not accepting for all $i\in [1,p-1]$, and the following holds for each node $x$ along $\pi$:
\begin{compactitem}
 %   (thus, $C_1$
    %      corresponds to the initial TM configuration associated with $\alpha$
    %      but the number of blanks $\#$ to the right of
    %        $\alpha(n-1)$ can be different from $2^{2^{n}}-n$)
  %\item \emph{$C_p$ is accepting and $C_i$ is not accepting for all $i\in [1,p-1]$;}
  \item  if $x$ has label $\{\forall\}$, then $x$  has two children, with labels  $\{l\}$ and  $\{r\}$, respectively, and for the child $y$ of $x$ which is not visited by $\pi$, there is a main path visiting $y$;
   \item  if $x$ has label $\{\begTwo\}$, then $x$ has two children, and for the child $y$ of $x$ which is not visited by $\pi$, there is a unique infinite path $\pi'$ starting from $x$ and visiting $y$. Moreover, $\Lab(\pi')$ is of the form $sb \cdot \emptyset^{\omega}$, where $sb$ is a marked main sub-block ($sb$ is the \emph{companion}  of the non-marked main sub-block along $\pi$ associated with node $x$);
      \item  if $x$ has label $\{\begOne\}$, then $x$ has two children, and if we remove the child of $x$ visited by $\pi$ and all its descendants, then the resulting subtree rooted at node $x$ is a \emph{block check-tree};% (whose encoded  marked block is called the \emph{companion}  of the main  block along $\pi$ associated with node $x$);
  \item  if the label of $x$ is not in $\{\{\forall\},\{\begOne\},\{\begTwo\}\}$, then $x$  has exactly one child.
   \end{compactitem}\vspace{0.2cm}
 A tree-code   $\tpl{T,\Lab}$ is \emph{well-formed} if for each main path $\pi$, the following additionally holds:
 \begin{compactitem}
  \item (i)  TM configuration codes along $w_\pi$ are \emph{well-formed}, (ii)  for each sub-block $sb$  along $\pi$, the companion  of $sb$   has the same content and number as $sb$, and (iii)
   for each   block $bl$ along $\pi$, the associated block check-tree is well-formed and encodes a check block having the same number and content as $bl$.
\end{compactitem}
A tree-code is \emph{fair}, if for each main path $\pi$, $w_\pi$ is faithful to the evolution of $\mathcal{M}$. Evidently, there is a fair well-formed tree-code iff there is an accepting computation tree of $\mathcal{M}$ over $\alpha$.
\end{definition}

\noindent  \textbf{Construction of the open $\CGS$ $\GS$ and the $\ATLStar$ formula $\varphi$ in Theorem~\ref{theorem:lowerboundReduction}.}
By the definition of tree-codes,  the following result (Lemma~\ref{lemma:constructionOfCGS}), concerning the construction of the open \CGS\ in Theorem~\ref{theorem:lowerboundReduction},  trivially follows,
where   a \emph{minimal} $2^{AP}$-labeled tree is a $2^{AP}$-labeled tree $\tpl{T,\Lab}$ whose root has label  $\{l\}$ and satisfying the following:
\begin{compactitem}
\item (i) for each node $x$, the children of $x$ have distinct labels and $\Lab(x)$ is either empty or a singleton; (ii)
 each  node labeled by $\{\begTwo\}$ (resp., $\{\begOne\}$) has two children, one with empty label and the other one with label $\{\markTwo\}$ (resp., $\{\checkOne\}$); and
 (iii)
  each node labeled by $\{\forall\}$ has two children, with labels  $\{l\}$ and $\{r\}$, respectively.
\end{compactitem}

\newcounter{lemma-constructionOfCGS}
\setcounter{lemma-constructionOfCGS}{\value{lemma}}

\begin{lemma}\label{lemma:constructionOfCGS} One can construct in time polynomial in $|\Prop|$, a finite turn-based
open $\CGS$ $\GS$ over $\Prop$ and $\Agents =\{\env,\sys\}$ satisfying the following:
\begin{compactitem}
  \item $\Unw(\GS) =\tpl{T,\Lab,\Trans}$, where $\tpl{T,\Lab}$ is a \emph{minimal} $2^{AP}$-labeled tree;
  \item for each \emph{tree-code} $\tpl{T',\Lab'}$, there is a strategy tree in  $\exec(\GS)$ of the form $\tpl{T',\Lab',\Trans'}$;
  \item each state which is labeled by either $\{\begOne\}$ or $\{\begTwo\}$ or $\{\forall\}$ is controlled by the system;
     \item   each state whose label is \emph{not} in $\{\{\begOne\},\{\begTwo\}, \{\forall\}\}$ is controlled by the environment.
\end{compactitem}
\end{lemma}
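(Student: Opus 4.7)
The plan is a direct construction with essentially one state per possible label. I would take $\States := \{s_L : L \in \{\emptyset\} \cup \{\{p\} : p \in \Prop\}\}$ with $\Lab(s_L) := L$ and initial state $s_0 := s_{\{l\}}$. Picking $|\Actions|$ at least $|\Prop|+1$ is enough for the transition function to realize up to two successors at system states and up to $|\Prop|+1$ successors at environment states.

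At the three ``system'' label types I would make the environment passive and let the system pick among exactly the successors imposed by the minimal-tree definition: $s_{\{\begOne\}}$ has successors $\{s_\emptyset, s_{\{\checkOne\}}\}$, $s_{\{\begTwo\}}$ has successors $\{s_\emptyset, s_{\{\markTwo\}}\}$, and $s_{\{\forall\}}$ has successors $\{s_{\{l\}}, s_{\{r\}}\}$. At every other state $s_L$ I would make the system passive and give the environment one distinct action per label, so $s_L$ has $|\Prop|+1$ successors, one with each singleton-or-empty label. Because sibling successors always carry pairwise distinct labels drawn from $\{\emptyset\} \cup \{\{p\} : p \in \Prop\}$, the unwinding $\Unw(\GS)$ satisfies the three clauses in the definition of a minimal tree by construction, giving the first bullet. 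The third and fourth bullets are immediate from the designation of system versus environment states.

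For the second bullet, given a tree-code $\tpl{T', \Lab'}$, I would identify $T'$ with a prefix-closed subset of $\States^*$ by induction on depth, preserving labels. The root of $T'$ corresponds to $\varepsilon$ since $\Lab(s_0) = \{l\}$ matches the root label of any tree-code. At an inductively identified node $\nu$ corresponding to $x \in T'$: if $\Lab'(x) \in \{\{\begOne\}, \{\begTwo\}, \{\forall\}\}$, the tree-code definition forces $x$ to have exactly two children whose labels are precisely those of the two successors of $\nu$ in $\Unw(\GS)$, so we map them bijectively by label; otherwise $x$ has a unique child with some label $L'$ that coincides with one of the $|\Prop|+1$ children of $\nu$, and we select that child. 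Defining $\Trans'$ as the restriction of $\Trans$ to retained arcs yields a \CGT\ that keeps every successor at every system state and a nonempty subset at every environment state, hence lies in $\exec(\GS)$ and has the form $\tpl{T', \Lab', \Trans'}$.

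The total size of $\GS$ is $|\Prop|+1$ states with a transition function of polynomial description, so $\GS$ is constructible in time polynomial in $|\Prop|$. The only delicate point is to observe that the two-child structure mandated at $\{\begOne\}$-, $\{\begTwo\}$-, and $\{\forall\}$-nodes by the definition of tree-code coincides with what the minimal-tree definition requires at those labels; once this alignment is noted, the strategy-tree embedding is immediate and the remaining verifications are routine bookkeeping.
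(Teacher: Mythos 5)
Your construction is correct and is exactly the one the paper intends: the paper states that the lemma ``trivially follows'' from the definitions and gives no explicit proof, and the one-state-per-label \CGS\ with environment-passive states at the $\{\begOne\}$, $\{\begTwo\}$, $\{\forall\}$ labels and a fully branching environment elsewhere is the canonical realization. One small slip in your verification of the second bullet: your ``otherwise'' case asserts a unique child, but $\{\checkTwo\}$-labeled nodes inside block check-trees have \emph{two} children (with labels $\emptyset$ and $\{\markTwo\}$); since these are environment states, the fix is just to retain, at every environment node, the full set of children present in the tree-code (a nonempty subset of the $|\Prop|+1$ distinctly labeled successors), which your general setup already accommodates.
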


According to Lemma~\ref{lemma:constructionOfCGS}, a minimal $2^{AP}$-labeled tree can be interpreted as a two-player turn-based \CGT\ between the environment and the system,
where the nodes having label in   $\{\{\begOne\},\{\begOne\},\{\forall\}\}$   are controlled by the system, while all the other nodes are controlled by the environment. With this interpretation, we now establish the following result  that together
with Lemma~\ref{lemma:constructionOfCGS} provide a proof of Theorem~\ref{theorem:lowerboundReduction}.

\newcounter{lemma-constructionOfATLStarFOrmula}
\setcounter{lemma-constructionOfATLStarFOrmula}{\value{lemma}}

\begin{lemma}\label{lemma:constructionOfATLStarFOrmula} One can construct in time polynomial in $n$ and $|\Prop|$, an $\ATLStar$ state formula $\varphi$ over $\Prop$ and
 $\Agents =\{\env,\sys\}$
such that for each minimal $2^{AP}$-labeled tree $\tpl{T,\Lab}$, $\tpl{T,\Lab}$ is a model of $\varphi$ iff $\tpl{T,\Lab}$ is a fair well-formed tree-code.
\end{lemma}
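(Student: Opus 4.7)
My plan is to construct $\varphi$ as a polynomial-size conjunction of $\ATLStar$ state formulas, one per clause of the definition of a fair well-formed tree-code. Purely structural requirements (the tree-code shape, the global form of the labels along main paths, the initial/accepting boundary conditions) can be expressed by a universal path quantifier $\Forall{\emptyset}$ wrapped around an LTL formula. Crucially, main paths are identifiable inside the logic as the paths satisfying $\Always(\neg\markTwo \wedge \neg\checkOne)$, so ``$\psi$ holds along every main path'' is encoded as $\Forall{\emptyset}\bigl(\Always(\neg\markTwo \wedge \neg\checkOne) \rightarrow \psi\bigr)$.

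For all the \emph{local} well-formedness conditions, anything whose witnesses lie at distance polynomial in $n$ along the main path is expressible by polynomial-size LTL inside such a path quantifier. This covers the $n$-bit counter inside a single sub-block, the increment of the $n$-counter between two consecutive sub-blocks of the same block (whose corresponding bits are separated by a constant number of symbols), and the clause that each marked companion agrees in content and $n$-number with its associated main sub-block (a comparison at $\begTwo$ between two branches of the same node).

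The difficult clauses are (i) the increment of the $2^{2^n}$-block-counter between consecutive main blocks, (ii) the equality of number and content between each main block and its associated block check-tree, and (iii) the fairness of the TM transition between consecutive main configurations. In all three, the bits to compare lie in sub-blocks separated by an exponential distance along the main path, which is beyond LTL. The idea is to exploit the branching introduced by the marked companions and the block check-trees together with the strategic quantifiers of $\ATLStar$: a formula of shape roughly $\Forall{\env}\,\Always(\dots)$ is used to force the environment to navigate into a sub-block and thereby commit to an $n$-bit index, while a nested $\Exists{\sys}$ lets the system supply a matching sub-block in the next main block (or in the check tree) by branching into its marked companion; since the index has length $n$, the bitwise match of the two indices is an LTL comparison of size $O(n^2)$.

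The main obstacle will be clause (iii). It must simultaneously pin down a sub-block position $j$ in $bl$, the corresponding position $j$ in the next main block $bl'$, the direction marker $l$ or $r$ of $bl'$, and compare the content of the $j$-th cell of $bl'$ with $\Succ_{\dire}$ applied to the triple carried by the $j$-th block of $bl$ (here the $\Lambda$-encoding, which already records the contents of neighboring cells inside each triple, is what prevents us from having to do any further exponential-distance comparison within $bl$ itself). I would handle this through a careful nesting of strategic quantifiers mirroring an adversarial challenge game: the environment commits (via the tree navigation in $bl$) to a position $j$ and reveals its index via a marked companion, the system must then expose, somewhere in the sub-block at position $j$ of $bl'$ (again via its marked companion), an index that matches the environment's index and a content bit consistent with $\Succ_{\dire}$, and the polynomial-size $n$-bit matching on indices certifies that the two sides indeed refer to the same $j$. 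Provided these ingredients are assembled without any blow-up larger than polynomial in $n$ and $|\Prop|$, the whole formula $\varphi$ has the required size and the correctness follows clause by clause from the tree-code definition and Proposition~\ref{prop:WelFormedness}.
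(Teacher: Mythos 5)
Your decomposition (CTL\(^*\) for the structural and locally checkable clauses, the \(O(n^2)\)-size LTL gadget that matches two \(n\)-bit sub-block indices via a marked companion) is on the right track, but the part that actually requires \ATLStar\ --- the fairness clause --- is not correctly worked out, and the gap is exactly where the lemma's difficulty lies. First, your challenge--response game does not match the fixed control assignment: the only branching the environment owns is at the \(\checkTwo\)-nodes inside block check-trees, while the branches into marked companions of \emph{main} sub-blocks hang off \(\begTwo\)-nodes, which are system-controlled (as are \(\begOne\) and \(\forall\)). So \(\Forall{\env}\) cannot ``force the environment to navigate into a sub-block of \(bl\) and commit to an index'': the environment has no choice there, and the quantifier is vacuous for the purpose you assign to it. Second, and more fundamentally, identifying ``the block of \(w_{C_{\dir}}\) with the same number as \(bl\)'' requires verifying equality of two \(2^n\)-bit numbers, i.e.\ \(2^n\) simultaneous index-and-content matches, one per sub-block. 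A single existential ``response'' path exposes one marked companion and hence certifies only one bit; a polynomial nesting of strategic quantifiers does not give you exponentially many challenges. (Your clause (iii) also conflates the two counter levels: sub-block positions index bits of a block's number, whereas \(\Succ_{\dir}\) acts on the \(\Lambda\)-content of a block, not on sub-block contents.)

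The missing idea is the specific use of the block check-trees together with a \emph{single} \(\Exists{\sys}\). The paper's \(\varphi_{\fair}\) has the shape \(\A\Always\bigl((\begOne\wedge\E\Eventually\dir)\rightarrow\Exists{\sys}(\psi_{\dir}\wedge\psi_{\textit{cor}})\bigr)\): the guessed system strategy deterministically routes every outcome from the start of \(bl\) along \(bl\)'s main sub-blocks, through the end of \(w_C\), into \(w_{C_{\dir}}\), and then into the check-tree \(\BCT\) of one chosen block; because the \(\checkTwo\)-branching inside \(\BCT\) is environment-controlled, the outcome set automatically contains, for each of the \(2^n\) sub-blocks of \(\BCT\), a path ending in its marked companion. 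The LTL matrix, evaluated on all outcomes, then checks all \(2^n\) bit-comparisons at once via \(\Eventually(\markTwo\wedge\Next^{i+1}b)\), and \(\psi_{\dir}\) pins the \(\Succ_{\dir}\) relation on the \(\Lambda\)-contents. This ``deterministic routing plus universal environment branching inside one subtree'' is precisely what the check-trees are for, and what neither \(\E\), \(\A\), nor your \(\Forall{\env}/\Exists{\sys}\) nesting provides. A smaller inaccuracy: the \(2^n\)-bit block-counter increment and the block/check-tree number equality do not need strategic quantifiers at all --- they are handled in CTL\(^*\) (\(\varphi_{\WTC}\)) because a single path can traverse both objects being compared (e.g.\ the sub-blocks of one block and the marked companions of the next).
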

\begin{proof}
The \ATLStar\ formula $\varphi$ is given by $\varphi:= \varphi_\TC \wedge \varphi_\WTC \wedge \varphi_\fair$,
where: (i) $\varphi_\TC$ is a \CTLStar\ formula which is satisfied by a  minimal  $2^{AP}$-labeled tree  $\tpl{T,\Lab}$ iff $\tpl{T,\Lab}$ is a tree-code,
(ii) $\varphi_\WTC$ is a \CTLStar\ formula  requiring that each tree-code   is well-formed, and (iii)
$\varphi_\fair$ is an \ATLStar\ formula ensuring that a well-formed tree-code is fair. Here, we focus on the construction of the \ATLStar\ formula  $\varphi_\fair$.
Let $\tpl{T,\Lab}$ be a well-formed tree-code,
 $\pi$ be a main path of $\tpl{T,\Lab}$, and $w_C$ be a non-terminal well-formed configuration code  along $\pi$ associated with a TM configuration $C$.
 %The \ATLStar\ formula
%$\varphi_\fair$ ensures that $\tpl{T,\Lab}$ is fair as well, i.e. for each main path $\pi$, the sequence of well-formed configurations codes along $\pi$
%is faithful to the evolution of $\mathcal{M}$. Fix a well-formed configuration code $w_C$ along $\pi$ associated with the TM configuration $C$
%such that $w_C$ is not the last configuration code along $\pi$.
Assume that the last symbol of $w_C$ is $\forall$, i.e., $C$ is   universal (the other case, where the last symbol is $\exists$ being similar).  Let $x$ be the node associated with the last symbol of $w_C$.
Then, there are two configuration codes $w_{C_l}$ and $w_{C_r}$ associated with configurations $C_l$ and $C_r$, respectively, such that
the first symbol of $w_{C_l}$ (resp., $w_{C_r}$) is $\{l\}$ (resp., $\{r\}$). Moreover, one of the codes follows $w_C$ along $\pi$, while the other one follows
$w_C$ along a main path which visits the child of node $x$ which is not visited by $\pi$. We have to require that for all $\dir\in \{l,r\}$,
$C_\dir= succ_\dir(C)$. This reduces to check that for each block $bl$ of $w_C$, denoted by $bl_\dir$ the block of
$w_{C_\dir}$ having the  same number as $bl$, and by $(u_p,u,u_s)$ (resp., $(u'_p,u',u'_s)$) the content of block
  $bl$ (resp., $bl_\dir$), the following holds: $u'= \Succ_\dire(u_p,u,u_s)$. % where $(u_p,u,u_s)$ is the content of
%$bl$ and $u'$ is the second component of the content $(u'_p,u',u'_s)$ of $bl_\dir$.
For this check, we exploit the block check-tree, say $\BCT$,   associated
with the main block $bl_\dir$, whose encoded check TM block (the companion of $bl_\dir$) has the same content and number as $bl_\dir$.
Recall that all the nodes in $\BCT$ but the root (which is a $\{\begOne\}$-labeled node) are controlled by the environment. Moreover, the unique nodes in
$\tpl{T,\Lab}$ controlled by the system are the ones having label in $\{\{\forall\},\{\begOne\},\{\begTwo\}\}$. Let $x_{bl}$ be the starting node for the selected block $bl$
of $w_C$. Then, there is a strategy $f_{bl}$ of the player system such that
 \begin{compactitem}
  \item (i) each play consistent with the strategy $f_{bl}$ starting from node $x_{bl}$ gets trapped in the check-tree $\BCT$, and (ii)
 each infinite path starting from node $x_{bl}$ and leading to some marked sub-block of $\BCT$ is consistent with the strategy
  $f_{bl}$.
\end{compactitem}
Note that each strategy of the system selects exactly one child for each node controlled by the system. Thus, the $\ATLStar$ formula $\varphi_\fair$ ``guesses'' the strategy $f_{bl}$ and ensures that the guess is correct by verifying the following conditions
on the outcomes of $f_{bl}$ from   node $x_{bl}$:
 \begin{compactenum}
  \item  each outcome visits a $\{\checkOne\}$-node whose parent belongs to a  block of $w_{C_\dir}$. This ensures that all the outcomes get trapped
  in the \emph{same} block check-tree associated with some block of $w_{C_\dir}$. Moreover, for the label $(u'_p,u',u'_s)$ of the node following the $\{\checkOne\}$-node
  along the outcome,  $u'= \Succ_\dire(u_p,u,u_s)$, where $(u_p,u,u_s)$ is the content of
$bl$.
  \item  for each outcome $\pi'$ which leads to a marked sub-block $sb'$ (note that this sub-block is necessarily in $\BCT$), denoting by $sb$ the sub-block
  of $bl$ having the same number as $sb$, it holds that $sb$ and $sb'$ have the same content.
\end{compactenum}
The first (resp., second) condition is implemented by the \LTL\ formula $\psi_\dir$ (resp., $\psi_{\textit{cor}}$) in the definition of $\varphi_\fair$ below.%\vspace{-0.2cm}
\[
\varphi_\fair  :=    \displaystyle{\bigwedge_{\dire\in\{l,r\}}} \A\Always \Bigl( (\begOne \wedge \E\Eventually \,\dir) \, \longrightarrow \,
 \Exists{\sys}\bigl( \psi_{\dire} \wedge  \psi_{\textit{cor}}  \bigr)\,\,\Bigr)
\]
\vspace{-0.2cm}
\[
\begin{array}{ll}
\psi_{\dire} := & \displaystyle{\bigvee_{(u_p,u,u_s),(u'_p,u',u'_s)\in\Lambda:\, u'= \Succ_\dire(u_p,u,u_s) }}\Bigl(\Next^{2}(u_p,u,u_s) \,\,\wedge\,\, \\
& \quad  \Bigl[(\neg l\wedge \neg r)\,\until\, \bigl(\dire \wedge \Next( (\neg l\wedge \neg r)\, \until\, (\checkOne \wedge \Next (u'_p,u',u'_s)))\bigr)\Bigr]\Bigr)
\end{array}
\]
\vspace{-0.2cm}
\[
\psi_{\textit{cor}} := \Eventually \markTwo \, \rightarrow \, \Bigl(\bigl(\neg\EndOne \wedge (\begTwo \rightarrow \Next\theta_{\textit{cor}})\bigr )\,\until \,\EndOne\Bigr)
\]
\vspace{-0.2cm}
\[
\theta_{\textit{cor}} :=  \Bigl(\displaystyle{\bigwedge_{i=1}^{i=n}\bigvee_{b\in \{0,1\}}}((\Next^{i+1}\, b)\wedge \Eventually(\markTwo\wedge \Next^{i+1} b))\Bigr) \, \longrightarrow \,  \displaystyle{\bigvee_{b\in \{0,1\}}}((\Next\, b)\wedge \Eventually(\markTwo\wedge \Next b))
\]
\vspace{-0.2cm}
This concludes the proof of Lemma~\ref{lemma:constructionOfATLStarFOrmula}.
\end{proof}

%&\quad\quad \displaystyle{\bigvee_{(u_p,u,u_s),(u,u_s,u')\in\Lambda}}\bigl(\Next(u_p,u,u_s) \,\wedge\, (\neg\begOne \until (\begOne \wedge (\A\Next)^{2} (u,u_s,u'))) \bigr)\,\,\wedge
%\vspace{0.2cm} \\

%-------------------------------------------------------------------------------
\section{Conclusion}\label{sec:conc}
%-------------------------------------------------------------------------------

%In this paper, we have addressed and carefully investigated the computational complexity of the module-checking problem against \ATL\ and \ATLStar\ specifications. We have shown that, in case we restrict to \ATL\ formulas, this problem turns out to be \EXPTIME-complete, which is the same complexity of the \CTL\ module-checking problem. In case we consider \ATLStar\ formulas, the problem becomes much harder and precisely \THREEXPTIME-complete, hence exponentially harder than \CTLStar\ module-checking. The latter corrects an incorrect claim made in \cite{JamrogaM15}.

Module checking is a useful game-theoretic framework to deal with branching-time specifications. The setting is simple and powerful as it allows to capture the essence of the adversarial interaction between an open system (possibly consisting of several independent components) and its unpredictable environment. The work on module checking has brought an important contribution to the strategic reasoning field, both in computer science and AI~\cite{AlurHK02}. Recently, \CTL/\CTLStar\ module checking has come to the fore as it has been shown that it is incomparable with \ATL/\ATLStar\ model checking~\cite{JamrogaM14}. In particular the former can keep track of all moves made in the past, while the latter cannot. This is a severe limitation in \ATL/\ATLStar\ and has been studied under the name of irrevocability of strategies in~\cite{AGJ07}. Remarkably, this feature can be handled with more sophisticated logics such as \emph{Strategy Logics}~\cite{CHP10,MMPV14}, \emph{\ATL\ with strategy contexts}~\cite{LaroussinieM15}, and \emph{quantified} \CTL\ \cite{LaroussinieM14}. %~\cite{BLLM09}.
 However, for such logics, the relative model checking question turns out to be non-elementary.

In this paper, we have addressed and carefully investigated the computational complexity of the module-checking problem against \ATL\ and \ATLStar\ specifications. We have shown that \ATL\ module-checking is \EXPTIME-complete, while \ATLStar\ module-checking is \THREEXPTIME-complete. The latter corrects an incorrect claim made in \cite{JamrogaM15}. Note that following~\cite{LaroussinieM15}, \ATLStar\ (resp., \ATL) module-checking can be reduced to model checking against \emph{quantified} \CTLStar\ (resp., \emph{quantified} \CTL), but this approach would lead to non-elementary algorithms for the considered problems.
This work opens to several directions for future work. Mainly, we aim to investigate the same problem in the imperfect information setting as well as for infinite-state open systems.

%\clearpage
%\bibliography{bib2,wojtek,wojtek-own}
%\bibliographystyle{splncs03}

\bibliographystyle{eptcs}
\bibliography{bib2}

\end{document}